\newtheorem{theorem}{\bf Theorem}[section]
\newtheorem{definition}{\bf Definition}[section]
\newtheorem{lemma}{\bf Lemma}[section]
\newtheorem{conjecture}{\bf Conjecture}[section]
\DeclareMathOperator{\cutrk}{cutrk}
\DeclareMathOperator{\rank}{rank}
\DeclareMathOperator{\rwd}{rwd}
\DeclareMathOperator{\cwd}{cwd}
\DeclareMathOperator{\width}{width}
\DeclareMathOperator{\qr}{qr}
\newcommand{\abs}[1]{|#1|}
\newcommand{\ket}[1]{|#1\rangle}
\newcommand{\bra}[1]{\langle#1|}
\newcommand{\adj}[1]{\textproc{adj}(#1)}
\begin{document}

\title{Transforming graph states using single-qubit operations}

\author{
Axel Dahlberg$^{1}$ and Stephanie Wehner$^{1}$}

\address{$^{1}$QuTech, Lorentzweg 1, 2628 CJ Delft, Netherlands}

\subject{Quantum Information Theory}

\keywords{graph states, single-qubit Clifford operations, single-qubit Pauli measurements, computational complexity, rank-width}

\corres{Axel Dahlberg\\
\email{e.a.dahlberg@tudelft.nl}}

\begin{abstract}
Stabilizer states form an important class of states in quantum information, and are of central importance in quantum error correction. 
Here, we provide an algorithm for deciding whether one stabilizer (\emph{target}) state can be obtained 
from another stabilizer (\emph{source}) state by single-qubit Clifford operations ($\mathrm{LC}$), single-qubit Pauli 
measurements ($\mathrm{LPM}$), and classical communication ($\mathrm{CC}$) between sites holding the individual qubits. 
What's more, we provide a recipe to obtain the sequence of $\mathrm{LC}+\mathrm{LPM}+\mathrm{CC}$ operations which prepare
the desired target state from the source state, and show how these operations can be applied in parallel to reach the target state in constant time.
Our algorithm has applications in quantum networks, quantum computing, and can also serve as a design tool - for example, 
to find transformations between quantum error correcting codes. We provide a software implementation of our algorithm that 
makes this tool easier to apply. 

A key insight leading to our algorithm is to show that the problem is equivalent to one in graph theory, which is to decide whether some graph $G'$ is a \emph{vertex-minor} of another graph $G$.
The vertex-minor problem is in general $\mathbb{NP}$-Complete, but can be solved efficiently on graphs which are not too complex.
A measure of the complexity of a graph is the \emph{rank-width} which equals the \emph{Schmidt-rank width} of a subclass of stabilizer states called graph states, and thus intuitively is a measure of entanglement.  
Here we show that the vertex-minor problem can be solved in time $O(|G|^3)$ where $|G|$ is the size of the graph $G$, whenever the rank-width of $G$ and the size of $G'$ are bounded. Our algorithm is based on techniques by Courcelle for solving fixed parameter tractable problems, where here the relevant fixed parameter is the rank width. The second half of this paper serves as an accessible but far from exhausting introduction to these concepts, that could be useful for many other problems in quantum information. 
\end{abstract}


\begin{fmtext}


\end{fmtext}


\maketitle

\section{Introduction}
Stabilizer states form a well studied class of quantum states with many applications in quantum information theory and technology.
One of the important features of stabilizer states is that they can be described efficiently. As a consequence, quantum circuits 
consisting only of Clifford operations, and thus mapping stabilizer states to stabilizer states, can be simulated efficiently on a classical 
computer~\cite{Gottesman1999}. Despite this, the class of stabilizer states is still rich enough to be useful 
in almost any application of quantum networks. An example of such a state that features in many quantum protocols is the GHZ state. 
In networks, applications thus include quantum secret sharing~\cite{Markham2008}, anonymous transfer~\cite{Christandl2005}, conference key agreement~\cite{Ribeiro2017}, and clock synchronization~\cite{Jozsa2000}.
Furthermore, stabilizer states are also important for quantum computation since they form a universal resource for measurement-based quantum computation~\cite{Raussendorf2001}.
The set of stabilizer states includes cluster states and logical states (codewords) of most 
quantum error correcting codes~\cite{Gottesman1997}.

Here, we provide an algorithm for deciding whether a specific multipartite entangled target state can be obtained from a given 
source state, by local Clifford operations and Pauli measurements on the individual qubits, and classical communication between the sites holding these qubits. 
More specifically, we consider the problem of deciding whether some stabilizer (\emph{target}) state $\ket{S_t}$ can be produced from some given stabilizer (\emph{source}) state $\ket{S_s}$ by single-qubit Clifford operations, single-qubit Pauli measurement and classical communication: $\mathrm{LC}+\mathrm{LPM}+\mathrm{CC}$.
An efficient algorithm for solving this task has many interesting applications, including:
\begin{itemize}
\item Given a source state in a quantum network, or distributed quantum processor, our algorithm allows us to quickly decide if, and how, simple $\mathrm{LC}+\mathrm{LPM}+\mathrm{CC}$ operations can turn this existing source state into a desired target state required by a quantum protocol. This provides a useful tool for entanglement routing and management in a quantum internet, or a networked quantum computer, where fast decisions are essential to combat the limited lifetime of quantum memories.  
\item Quantum error correcting codes play an important role in realizing fault tolerant quantum computing systems, and have also been put forward as a means to eventually realize all photonic quantum repeaters in the future~\cite{Azuma2015}.
Here, our algorithm can be used a design tool to find the measurements to realize a specific target state from a given source state. 
\end{itemize}

We will phrase our results in terms of graph states, which enables us to apply existing algorithmic methods more easily.
Graph states form a strict subclass of stabilizer states, which are described by simple undirected unweighted graphs.
The vertices in the graph represent qubits initialized in the state $\ket{+}=\frac{1}{\sqrt{2}}(\ket{0}+\ket{1}$ and edges represent controlled phase gates between the corresponding qubits.
Any stabilizer state is in fact single-qubit Clifford equivalent to some graph state~\cite{VandenNest2004}.
Furthermore, a graph state that is single-qubit Clifford equivalent to a given stabilizer state on $n$ qubits can be found efficiently in time $\mathcal{O}(n^3)$.
Thus, if we find an efficient algorithm that decides if some graph state $\ket{G_t}$ can be reached from $\ket{G_s}$ by $\mathrm{LC}+\mathrm{LPM}+\mathrm{CC}$ we also have an efficient algorithm for the more general case where the target and source states are stabilizer states.

Let us now phrase our problem in a way that forms a natural relation to graph properties. 
To this end, we introduce the notion of a qubit-minor in the following definition, which precisely captures whether a graph state can be reached from another by $\mathrm{LC}+\mathrm{LPM}+\mathrm{CC}$.
\begin{definition}[qubit-minor]\label{def:QM} Assume $\ket{G}$ and $\ket{G'}$ are graph states on the sets of qubits $V$ and $U$ respectively.
    $\ket{G'}$ is called a qubit-minor of $\ket{G}$ if there exists a sequence of single-qubit Clifford operations (LC), single-qubit Pauli measurements (LPM) and classical communication (CC) that takes $\ket{G}$ to $\ket{G'}$, i.e.
    \begin{equation}
        \ket{G}\xrightarrow[\mathrm{LPM}+\mathrm{CC}]{\mathrm{LC}}\ket{G'}\otimes\ket{\mathrm{junk}}_{V\setminus U}.
    \end{equation}
    If $\ket{G'}$ is a qubit-minor of $\ket{G}$, we denote this as
    \begin{equation}
        \ket{G'}<\ket{G}.
    \end{equation}
\end{definition}
The reason for calling $\ket{G'}$ a qubit-minor of $\ket{G}$ is that this question is in fact equivalent to the graph theoretical question of whether $G'$ is a vertex-minor of $G$, as we show in theorem~\ref{thm:QMVM}.
In the graph theoretical picture, single-qubit Cliffords operations will be replaced by operations called \emph{local complementation}s on the graph and single-qubit Pauli measurements by local complementations and vertex-deletions.
A vertex-minor of some graph is by definition a graph that can be reached by some sequence of local complementations and vertex-deletions.
Equivalence of graphs under local complementations has been studied by Bouchet in~\cite{Bouchet1991}, which was used by Van den Nest et al. in~\cite{VandenNest2004a} to find an efficient algorithm to decide whether two graph states are equivalent under single-qubit Clifford operations.

The computational complexity of deciding if a graph $G'$ is a vertex-minor of $G$, and therefore if $\ket{G'}$ is a qubit-minor of $\ket{G}$, was, to the authors' knowledge, previously unknown.
In another paper~\cite{npcomplete} we show that this decision problem is in fact $\mathbb{NP}$-Complete.
There is therefore no efficient algorithm that solves this question in general, unless $\mathbb{P}=\mathbb{NP}$.

\subsection{Results}
Here, we show that the same problem can be solved in cubic time in the number of qubits of $\ket{G}$ on instances where the \emph{Schmidt-rank width} of $\ket{G}$ and the number of qubits of $\ket{G'}$ are bounded\footnote{Note that the time-dependence on the size of $G'$ can be removed if conjecture~\ref{conj:qrFPT} is true.}.
This is our first main result which we formally state in theorem~\ref{thm:QM_FPT} and prove in section~\ref{sec:courcelle}\ref{subsec:vm_formula}.
\begin{theorem}\label{thm:QM_FPT}
    There exists an algorithm that decides if $\ket{G'}$ is a qubit-minor of $\ket{G}$, and therefore if $G'$ is a vertex-minor of $G$, and has running time
    \begin{equation}
        \mathcal{O}(f(\abs{G'},r)\cdot \abs{G}^3),
    \end{equation}
    where $r$ is the rank-width of $G$ which is equal to the Schmidt-rank width of $\ket{G}$, $\abs{G}$ denotes the number of vertices in the graph $G$ and $f$ is some computable function.
    If conjecture~\ref{conj:qrFPT} is true then there exists an algorithm to the same problem but with running time
    \begin{equation}
        \mathcal{O}(f(r)\cdot\abs{G}^3).
    \end{equation}
\end{theorem}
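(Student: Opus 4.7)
The plan is to reduce the vertex-minor decision problem to a model-checking problem for monadic second-order logic ($\mathrm{MSOL}$) on graphs of bounded rank-width, and then to invoke a Courcelle-type meta-theorem.

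First, for each fixed target graph $G'$ I would construct a sentence $\varphi_{G'}$ of monadic second-order logic whose size depends only on $\abs{G'}$, such that $G \models \varphi_{G'}$ if and only if $G'$ is a vertex-minor of $G$. The construction existentially quantifies a subset $W \subseteq V(G)$ of size $\abs{V(G')}$ encoding the qubits that survive measurement, together with vertex-set variables encoding a bounded-length sequence of local complementations. The adjacency of vertices in $W$ after performing these local complementations can be written as a bounded Boolean combination of the original adjacency predicates and membership in the quantified sets; the formula then asserts that this transformed adjacency matches the adjacency of $G'$ under some bijection $W \to V(G')$, which is a constant-size disjunction since $\abs{V(G')}$ is bounded.

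Second, I would invoke the Courcelle--Oum theorem: any fixed $\mathrm{MSOL}_1$ sentence $\varphi$ can be model-checked on a graph $G$ of rank-width at most $r$ in time $g(\varphi,r)\cdot\abs{G}$, provided a rank-decomposition of width $r$ is supplied. Combined with the $\mathcal{O}(\abs{G}^3)$ algorithm of Oum--Seymour which either outputs a rank-decomposition of width at most $r$ or certifies that the rank-width exceeds $r$, this yields total running time $\mathcal{O}(f(\abs{G'},r)\cdot\abs{G}^3)$, with the cubic factor coming entirely from the decomposition step rather than from evaluation of $\varphi_{G'}$. The conjectural bound $\mathcal{O}(f(r)\cdot\abs{G}^3)$ under conjecture~\ref{conj:qrFPT} follows the same route once $\varphi_{G'}$ can be replaced by a single sentence independent of $G'$.

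The main obstacle is showing that vertex-minor containment for fixed $G'$ is $\mathrm{MSOL}$-definable, i.e.\ constructing $\varphi_{G'}$ with finite size depending only on $\abs{G'}$. The naive encoding as a sequence of local complementations is problematic because there is no obvious a priori bound on the length of such a sequence. I would address this by establishing a normal form: every vertex-minor reduction of $G$ onto $G'$ can be assumed to use only local complementations supported on (a bounded neighbourhood of) the surviving set $W$, which keeps the sequence length bounded by a function of $\abs{V(G')}$. Proving that this restriction is without loss of generality, and writing the adjacency-update formula so that it fits in $\mathrm{MSOL}_1$ rather than $\mathrm{MSOL}_2$, is the technical heart of the argument, and rests on invariance properties of the cut-rank function under local complementations already exploited in \textbf{Bouchet}'s and \textbf{Oum}'s work.
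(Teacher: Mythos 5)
Your reduction to a Courcelle-type meta-theorem is the right overall strategy and matches the paper's, but the step you yourself flag as ``the technical heart'' --- MSOL-definability of vertex-minor containment --- is where the proposal breaks down, and the fix you sketch does not work. You propose a normal form in which the local complementations are supported on a bounded neighbourhood of the surviving set $W$, so that the sequence has length bounded by a function of $\abs{V(G')}$. There is no such normal form: a vertex-minor reduction deletes all of $V(G)\setminus V(G')$, and the measurements in the $X$ and $Y$ bases correspond to local complementations performed \emph{at the deleted vertices} (and, for pivots, at their neighbours), which may be arbitrarily far from $W$ and arbitrarily many in number. Already for extracting a complete graph on a fixed set $U$ from a long path one needs operations on unboundedly many vertices outside any bounded neighbourhood of $U$. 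So a formula that existentially quantifies a bounded-length sequence of complementations cannot capture the property, and the gap is not a technicality but the central difficulty.

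The paper (following Courcelle and Oum) circumvents sequences entirely via Bouchet's isotropic systems: the whole local-complementation orbit of $G$ is parametrized by the Eulerian vectors of the canonical isotropic system $S(G)$, and each Eulerian vector is just a tripartition $(X_e,Y_e,Z_e)$ of $V(G)$ --- three set variables quantified \emph{once}, with no reference to how many complementations realize it. Both the predicate ``$(X_e,Y_e,Z_e)$ is Eulerian'' and the adjacency relation of the graph $\mathcal{G}(X_e,Y_e,Z_e)$ it describes are definable, but only in \emph{counting} monadic second-order logic C$_2$MS: the membership condition requires testing the parity of $\abs{N_v\cap Q}$, so plain $\mathrm{MSOL}_1$ as invoked in your second step does not suffice, and the meta-theorem must be the C$_2$MS version (Theorem 6.55 of Courcelle--Engelfriet), which gives the $\mathcal{O}(f(\abs{\phi},r)\cdot\abs{G}^3)$ bound directly. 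With that formula in hand, the resulting sentence has length $\mathcal{O}(\abs{G'}^2)$ and constant quantifier rank, which is exactly what yields the stated runtime and the conditional improvement under conjecture~\ref{conj:qrFPT}. Your remaining ingredients (the qubit-minor/vertex-minor equivalence, and the rank-decomposition approximation supplying the cubic factor) are consistent with the paper's argument.
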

Our second main result concerns $\mathrm{GHZ}$-states, which are useful for many applications on a quantum network and is therefore an important target state.
A $\mathrm{GHZ}$-state on the qubits in the set $U$ is given as
\begin{equation}
    \ket{\mathrm{GHZ}}_U=\frac{1}{\sqrt{2}}\left(\bigotimes_{v\in U}\ket{0}_v+\bigotimes_{v\in U}\ket{1}_v\right).
\end{equation}
One can easily check that the state $\ket{\mathrm{GHZ}}_U$ is single-qubit Clifford equivalent to the graph state $\ket{K_U}$, where $K_U$ is the complete graph with vertex-set $U$.
It is therefore the case that $\ket{\mathrm{GHZ}}_U$ can be mapped from $\ket{G}$ by $\mathrm{LC}+\mathrm{LPM}+\mathrm{CC}$ if and only if $\ket{K_U}$ is a qubit-minor of $\ket{G}$.
We show that this question can be solved efficiently if the Schmidt-rank width of $\ket{G}$ is bounded, as captured in theorem~\ref{thm:GHZ_FPT} and proven in section~\ref{sec:courcelle}\ref{subsec:vm_formula}.
\begin{theorem}\label{thm:GHZ_FPT}
    There exists an algorithm that decides if $\ket{K_U}$ is a qubit-minor of $\ket{G}$, and therefore if $K_U$ is a vertex-minor of $G$ and has running time 
    \begin{equation}
        \mathcal{O}(f(r)\cdot \abs{G}^3),
    \end{equation}
    where $r$ is the rank-width of $G$ which is equal to the Schmidt-rank width of $\ket{G}$, $\abs{G}$ denotes the number of vertices in the graph $G$ and $f$ is some computable function.
\end{theorem}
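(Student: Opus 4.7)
The plan is to express the predicate ``$K_U$ is a vertex-minor of $G$'' by a formula $\varphi(U)$ in monadic second-order logic over graphs with vertex-set quantification ($\mathrm{MSO}_1$) whose length does not depend on $\abs{U}$, and then to invoke the Courcelle-type meta-theorem for graphs of bounded rank-width: any $\mathrm{MSO}_1$ property can be decided in time $\mathcal{O}(f(r)\cdot\abs{G}^3)$ on graphs of rank-width at most $r$, after computing an approximate rank-decomposition in cubic time.

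First, I would reuse the general $\mathrm{MSO}_1$ encoding of the vertex-minor relation that is developed in Section~\ref{sec:courcelle}\ref{subsec:vm_formula} to prove Theorem~\ref{thm:QM_FPT}. For an arbitrary fixed target $G'$, that encoding has the shape of an existentially quantified certificate---a bounded collection of vertex subsets witnessing a sequence of local complementations and deletions up to the requisite equivalence---together with a verification subformula asserting that the certified vertex-minor is isomorphic to $G'$. In the verification subformula, it is precisely this isomorphism check that forces the overall formula length to grow with $\abs{G'}$ in the general statement of Theorem~\ref{thm:QM_FPT}.

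Second, I would specialize the verification subformula to the case $G'=K_U$. The decisive observation is that ``the certified vertex-minor on vertex set $U$ is the complete graph on $U$'' is a single universal clause: for all $u,v\in U$ with $u\neq v$, the pair $\{u,v\}$ is adjacent in the certified vertex-minor. This is a formula of constant length in the free variables $u,v$, and it entirely replaces the $\abs{G'}$-dependent isomorphism check used in the proof of Theorem~\ref{thm:QM_FPT}. The resulting $\mathrm{MSO}_1$ formula $\varphi(U)$ therefore has length depending only on the fixed machinery of the vertex-minor certificate, hence is of constant size.

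Feeding $\varphi(U)$ into the Courcelle-type meta-theorem then yields an algorithm with running time $\mathcal{O}(f(r)\cdot\abs{G}^3)$ as claimed, with no residual dependence on $\abs{U}$. The main obstacle is not in this specialisation but in the construction reused from Theorem~\ref{thm:QM_FPT}; what needs to be checked is that the vertex-minor certificate there is modular in the target, i.e.\ that the target graph is accessed only through an adjacency predicate that can be uniformly swapped for the ``all pairs in $U$ are adjacent'' predicate without disturbing the rest of the formula. Once this modularity is confirmed, the specialization and the invocation of the meta-theorem are immediate.
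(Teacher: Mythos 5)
Your proposal is correct and follows essentially the same route as the paper: the paper's Theorem~\ref{thm:prop_VM} formalizes exactly the modularity you identify (the target enters only through the $\textproc{Adj}$ predicate built from Eulerian vectors), and the explicit $\textproc{Complete\_VM}(X)$ formula at the end of section~\ref{sec:courcelle}\ref{subsec:vm_formula} is precisely your constant-size universal clause relativized to $U$. The only slight imprecision is that the logic needed is C$_2$MS rather than plain $\mathrm{MSO}_1$, since the Eulerian-vector subformulas use the parity predicate, but the same cubic-time meta-theorem (theorem~\ref{thm:C2MS}) applies.
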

Note in particular that the running time in theorem~\ref{thm:GHZ_FPT} does not depend on $U$, even if conjecture~\ref{conj:qrFPT} is false.
Similarly to theorem~\ref{thm:GHZ_FPT} one can also decide if a graph state has a qubit-minor on a subset $U$ with a given property\footnote{Expressible in C$_2$MS.}, efficiently on graph states with bounded Schmidt-rank width, see theorem~\ref{thm:prop_VM}.

Both of the two main results, theorem~\ref{thm:QM_FPT} and theorem~\ref{thm:GHZ_FPT}, rely on a variant of Courcelle's theorem, which we describe more in detail in section~\ref{sec:courcelle}.
Courcelle's theorem states that a large class of graph problems are fixed-parameter tractable.
This means that there exist algorithms for these problems which are efficient in the size of the input graphs, provided a certain parameter of these graphs is bounded.
This is a very powerful theorem, but a direct implementation of the algorithm given by Courcelle's theorem is not useful in practice.
The reason being that even though the algorithm is efficient, the hidden constant factor of the algorithm's asymptotic runtime is huge.
This huge constant factor is unavoidable since the theorem is so general and captures many $\mathbb{NP}$-Complete problems.
On the other hand, by knowing that a problem can be efficiently solved, one can usually find a more tailored efficient algorithm for the problem at hand, that does not have a huge hidden constant in the runtime.
In another paper~\cite{npcomplete}, we provide an efficient algorithm without a huge hidden constant for the problem of deciding whether $\ket{K_U}$ is a qubit-minor of some graph $\ket{G}$, if $\ket{G}$ has Schmidt-rank width one.
There are also many other approaches to find practical algorithms for problems captured by Courcelle's theorem, see for example~\cite{Langer2014} or~\cite{Ganian2010}.


We have implemented many of the concepts and algorithms mentioned in this paper in SAGE~\cite{sage} and MONA~\cite{mona}.
Both the code in SAGE and MONA can be freely accessed from the git-repository at~\cite{git}.
The functionalities provided by this repository include:
\begin{itemize}
    \item A function taking two graphs, $G$ and $G'$, as input and returns $\textproc{True}$ if the graph states $\ket{G}$ and $\ket{G'}$ are equivalent under single-qubit Clifford operations and otherwise returns $\textproc{False}$. The function has a runtime of $\mathcal{O}(\abs{G}^4)$ and is an implementation of the algorithm described in~\cite{Bouchet1991,VandenNest2004a}.
    \item A function taking two graphs, $G$ and $G'$, as input and returns a sequence of operations that takes $\ket{G}$ to a graph state which is single-qubit Clifford equivalent to $\ket{G'}$, if $\ket{G'}<\ket{G}$ and otherwise returns $\textproc{False}$. This function uses a more sophisticated version of the non-efficient algorithm described in section~\ref{sec:background}\ref{sec:brute}.
    \item A function taking a graph $G$ and a set $U$ as input and either returns a sequence of operations that takes $\ket{G}$ to a graph state which is single-qubit Clifford equivalent to $\ket{K_U}$ and therefore $\ket{\mathrm{GHZ}}_U$ or returns $\textproc{False}$. If the function returns $\textproc{False}$ and $G$ has rank-width one, then $\ket{K_U}\nless\ket{G}$ as we show on~\cite{npcomplete}. The runtime of this function is $\mathcal{O}(\abs{U}\abs{G}^3)$.
    \item As described in section~\ref{sec:courcelle} one can express whether $\ket{G'}<\ket{G}$ in a logic called monadic second-order logic. We have implemented the expression for $\ket{G'}<\ket{G}$ in MONA which is a software to translate such logic expressions to finite-state automata. This can then be used to construct efficient algorithms for graphs of bounded rank-width.
\end{itemize}

\subsection{Overview}
In section~\ref{sec:background} we describe graph states and introduce the notions of qubit-minors and vertex-minors.
We also provide a non-efficient but correct algorithm for deciding the vertex-minor problem, taking any graph as input, in section~\ref{sec:brute}.
Note that such an algorithm is necessarily non-efficient, unless $\mathbb{P}=\mathbb{NP}$, since the problem it solves is in general $\mathbb{NP}$-Complete.
Furthermore, we describe how the corresponding operations on the graph states can be applied in constant time in section~\ref{sec:constant}.
In section~\ref{sec:courcelle} we provide an efficient algorithm for graphs with bounded rank-width by making use of monadic second-order logic and Courcelle's theorem.
It is our intention that this section can also be used as a short introduction for those not familiar with these concepts.

\section{Background}\label{sec:background}

\subsection{Notation}
All graphs in this paper are simple, unweighted and undirected, where simple means that there are no self-loops or multi-edges.
Given a graph $G=(V,E)$, we will sometimes denote the vertex-set as $V(G)=V$ and the edge-set as $E(G)=E$.
By the size of a graph we mean the number of vertices, which we denote as $\abs{G}=\abs{V(G)}$.
We will denote the neighborhood of a vertex as
\begin{equation}
    N_v^{(G)}=\{u\in V(G)\;:\;(v,u)\in E(G)\}.
\end{equation}
If it is clear which graph is considered, we will also sometimes write $N_v$.
The induced subgraph of $G$ on the subset $U\subseteq V(G)$ is denoted as $G[U]$ and is the graph with vertex-set $U$ and edge-set
\begin{equation}
    \{(u,v)\in E(G):u\in U\land v\in U\}.
\end{equation}
We denote vertex-deletions by $\setminus v$ such that $G\setminus v=G[V(G)\setminus \{v\}]$.

The Pauli matrices will be denoted as
\begin{equation}
    \mathbb{I}=\begin{pmatrix}1 & 0\\0 & 1\end{pmatrix},\quad X=\begin{pmatrix}0 & 1\\1 & 0\end{pmatrix},\quad Y=\begin{pmatrix}0 & -\mathrm{i}\\\mathrm{i} & 0\end{pmatrix},\quad Z=\begin{pmatrix}1 & 0\\0 & -1\end{pmatrix}.
\end{equation}
The single-qubit Clifford group $\mathcal{C}$ is the normalizer of the Pauli group $\mathcal{P}=\langle \mathrm{i}\mathbb{I},X,Z\rangle$, i.e.
\begin{equation}
    \mathcal{C}=\big\{C\in \mathcal{U}:(\forall P\in\mathcal{P}:CPC^\dagger\in\mathcal{P})\big\},
\end{equation}
where $\mathcal{U}$ is the single-qubit unitary operations.

Assume that $v_i$ is the label of a qubit which is part of some multi-qubit state $\ket{\psi}_{v_1\dots v_i \dots v_n}$.
We will then denote $P_{v_i}$ as the operation
\begin{equation}
    P^{(v_1\dots v_n)}_{v_i}=(\mathbb{I})_{v_1}\otimes\dots\otimes (P)_{v_i}\otimes\dots\otimes(\mathbb{I})_{v_n},
\end{equation}
where $P\in\{\mathbb{I},X,Y,Z\}$.
We will never write an explicit ordering of the qubits in a multi-qubit state $\ket{\psi}_{v_1\dots v_i \dots v_n}$ and rather write $\ket{\psi}_V$, where $V$ is the set $\{v_1,\dots,v_n\}$.
For explicit calculations one just needs to use a consistent ordering.
Similarly for the operation $P^{(v_1\dots v_n)}_{v_i}$ we will write $P^{V}_{v_i}$ or even $P_{v_i}$ when it is clear which set $V$ is considered.

\subsection{Graph states}\label{sec:graphstates}
A graph state $\ket{G}$ is a quantum state described by a simple unweighted undirected graph $G$, where the vertices of $G$ correspond to the qubits of $\ket{G}$. Formally, let $G=(V,E)$ be a graph, the graph state $\ket{G}$ is then defined as
\begin{equation}
    \ket{G}=\prod_{e\in E}C_Z^{e}\bigotimes_{v\in V}\ket{+}_v,
\end{equation}
where $C_Z^{(u,v)}$ is a controlled phase gate between qubits $u$ and $v$, i.e.
\begin{equation}
    C_Z^{(u,v)}=\ket{0}\bra{0}_u\otimes\mathbb{I}^{(V\setminus u)}_v+\ket{1}\bra{1}_u\otimes Z^{(V\setminus u)}_v.
\end{equation}
A graph state is also a stabilizer state~\cite{Hein2006}.
The generators of the stabilizer group of $\ket{G}$ can be written as
\begin{equation}
    g_v=X_v\prod_{u\in N_v}Z_u.
\end{equation}
In fact, any stabilizer state can be made into some graph state by only performing single-qubit Clifford operations~\cite{VandenNest2004}.
If two states, $\ket{\psi}$ and $\ket{\phi}$ are related by some sequence of single-qubit Clifford operations, we denote this by $\ket{\psi}\sim_\mathrm{LC}\ket{\phi}$.
Given a stabilizer state $\ket{S}$, one can find a graph state $\ket{G}$ such that $\ket{S}_V\sim_{\mathrm{LC}}\ket{G}_V$, by simply performing Gaussian elimination followed by certain operations on the columns of a matrix which rows are the symplectic form of the generators of $S$~\cite{Hein2006}.

To study what graph states can be reached from a given graph state we introduce the notion of a qubit-minor as defined in definition~\ref{def:QM}.
It turns out that the question of whether $\ket{G'}$ is a qubit-minor of $\ket{G}$ is equivalent to whether $G'$ is a vertex-minor of $G$, as we describe below.

\subsection{Local Clifford operations}\label{subsec:cliffords}
Let's consider the following sequence of single-qubit Clifford operations
\begin{equation}\label{eq:LCU}
    U_v^{(G)}=\exp\left(-\mathrm{i}\frac{\pi}{4}X_v\right)\prod_{u\in N_v}\exp\left(\mathrm{i}\frac{\pi}{4}Z_u\right).
\end{equation}
As shown in~\cite{Hein2006}, the operation $U_v^{(G)}$ on the state $\ket{G}$ can be seen as an operation on the graph $G$ since
\begin{equation}
    U_v^{(G)}\ket{G}=\ket{\tau_v(G)}
\end{equation}
where $\tau_v$ is a local complementation on the vertex $v$, as defined in definition~\ref{def:LC} and illustrated in equation~\eqref{eq:vis_LC}.

\begin{definition}[local complementation]\label{def:LC}
    A local complementation $\tau_v$ acts on a vertex $v$ of a graph $G$ by complementing the induced subgraph on the neighborhood of $v$. The neighborhoods of the graph $\tau_v(G)$ are therefore given by
    \begin{equation}
    N_u^{(\tau_v(G))}=\begin{cases}N_u\Delta (N_v\setminus\{u\}) & \quad \text{if } (u,v)\in E(G) \\ N_u & \quad \text{else}\end{cases},
    \end{equation}
    where $\Delta$ denotes the symmetric difference between two sets.
\end{definition}

\begin{equation}\label{eq:vis_LC}
\raisebox{-0.08\textwidth}{\includegraphics[scale=0.5]{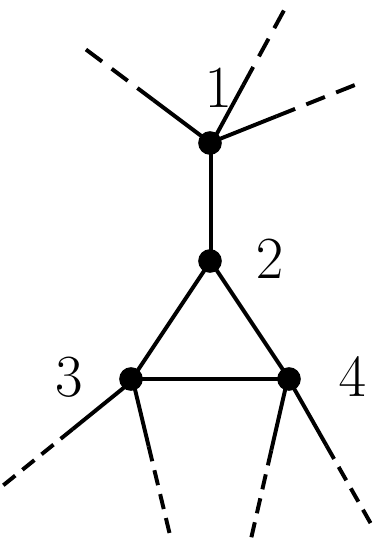}}\quad\xrightarrow{\tau_2}\quad\raisebox{-0.08\textwidth}{\includegraphics[scale=0.5]{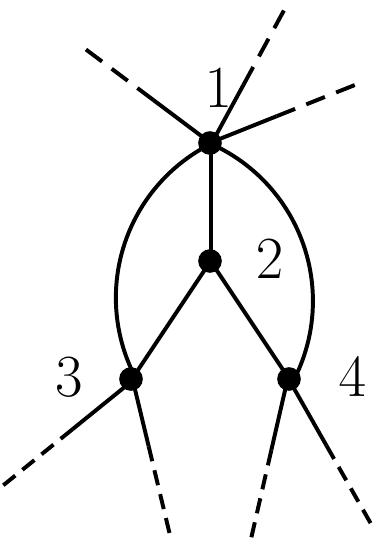}}
\end{equation}

Surprisingly, any single-qubit Clifford operation which takes some graph state to another graph state can seen as some sequence of local complementations on the corresponding graphs.
This was proven in~\cite{VandenNest2004} and we restate this theorem here.
\begin{theorem}[Van den Nest \cite{VandenNest2004}]\label{thm:nest}
    Two graph states $\ket{G}$ and $\ket{G'}$ are equivalent under single-qubit Clifford operations if and only if their corresponding graphs $G$ and $G'$ are related by some sequence of local complementations.
\end{theorem}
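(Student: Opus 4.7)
The plan is to establish the two directions separately. The easy direction -- that local-complementation-equivalent graphs give rise to LC-equivalent graph states -- is immediate from the identity $U_v^{(G)}\ket{G}=\ket{\tau_v(G)}$ displayed above: any sequence $\tau_{v_k}\circ\cdots\circ\tau_{v_1}$ of local complementations is realized by the corresponding product of single-qubit Cliffords $U_{v_k}^{(G_{k-1})}\cdots U_{v_1}^{(G_0)}$, where $G_i$ denotes the graph after the first $i$ complementations. No further input is needed for this implication.

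For the hard direction, I would pass to the binary symplectic representation of the stabilizer group. Because the generators $g_v=X_v\prod_{u\in N_v}Z_u$ correspond to the rows of the $n\times 2n$ binary matrix $[\mathbb{I}\,|\,\Gamma_G]$, where $\Gamma_G$ is the adjacency matrix of $G$, the full stabilizer group of the graph state is encoded by $\Gamma_G$ up to row operations. A single-qubit Clifford $C=\bigotimes_v C_v$ acts on this representation by a block-diagonal local symplectic matrix $S=\bigoplus_v S_v$ with $S_v\in\mathrm{Sp}(2,\mathbb{F}_2)$ on the $2n$ columns (paired as $(v,v+n)$), and one is free to perform arbitrary row operations over $\mathbb{F}_2$ since these merely relabel generators of the same stabilizer group. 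Thus $\ket{G}\sim_\mathrm{LC}\ket{G'}$ is equivalent to $[\mathbb{I}\,|\,\Gamma_G]$ and $[\mathbb{I}\,|\,\Gamma_{G'}]$ lying in the same orbit under the combined action of $\mathrm{GL}(n,\mathbb{F}_2)$ on the left and a block-diagonal local symplectic transformation on columns.

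The main obstacle is showing that every such orbit equivalence is actually realized by a sequence of local complementations, rather than by some exotic local Clifford with no graph counterpart. My plan is to invoke Bouchet's classification of locally equivalent graphs from \cite{Bouchet1991}: the row space of $[\mathbb{I}\,|\,\Gamma_G]$ is a maximal totally isotropic subspace of $\mathbb{F}_2^{2n}$ with respect to the standard symplectic form, and Bouchet's theorem asserts that two such subspaces representing graphs are related by a local symplectic transformation if and only if the underlying graphs are related by local complementations. Applying this identifies LC-orbits of graph states with local-complementation orbits of graphs and closes the argument. The delicate step is verifying that the two primitive operations appearing in $U_v^{(G)}$, namely $\exp(-\mathrm{i}\pi X_v/4)$ and $\exp(\mathrm{i}\pi Z_u/4)$, generate enough of the local symplectic group to match Bouchet's orbit structure, and tracking the induced graph moves carefully so that the combinatorial statement aligns precisely with the algebraic one.
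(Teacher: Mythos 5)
The paper does not prove this theorem; it is imported verbatim from Van den Nest et al.~\cite{VandenNest2004}, and your sketch is a faithful reconstruction of that reference's strategy: the forward direction from the identity $U_v^{(G)}\ket{G}=\ket{\tau_v(G)}$, and the converse by passing to the binary symplectic representation $[\mathbb{I}\,|\,\Gamma_G]$ and invoking Bouchet's classification of locally equivalent graphs~\cite{Bouchet1991}. Two small remarks: the binary picture forgets the signs of the stabilizer generators, which is harmless only because sign discrepancies are absorbed by local Paulis (themselves local Cliffords acting trivially on the symplectic representation), and this should be said explicitly; and your closing ``delicate step'' is aimed at the wrong implication --- whether the factors of $U_v^{(G)}$ generate the local symplectic group is irrelevant to the hard direction, which needs only the combinatorial fact that \emph{every} block-diagonal symplectic map carrying one graph code to another is matched by a sequence of local complementations, i.e.\ exactly Bouchet's theorem.
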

Let $m=v_1v_2\dots v_l$ be a sequence of vertices of $G$, then we denote the sequence of local complementations on the vertices in $m$ as
\begin{equation}
    \tau_m(G)=\tau_{v_l}\circ\dots\circ\tau_{v_2}\circ\tau_{v_1}(G).
\end{equation}
If there exists a sequence $m$ such that $\tau_m(G)=G'$ then we write this as $G\sim_\mathrm{LC}G'$.
Theorem~\ref{thm:nest} can therefore be stated as
\begin{equation}
    \ket{G}\sim_\mathrm{LC}\ket{G'}\quad\Leftrightarrow\quad G\sim_\mathrm{LC}G'.
\end{equation}

Testing whether two graphs are LC-equivalent can be done in time $\mathcal{O}(n^4)$, where $n$ is the size of the graphs, as shown in~\cite{Bouchet1991}.


\subsection{Local Pauli measurements}\label{subsec:meas}
How should the corresponding graph of a graph state be updated when a measurement is performed?
In~\cite{Hein2004} it is shown how the Pauli projectors $P_v^{(X,\pm)}$, $P_v^{(Y,\pm)}$, $P_v^{(Z,\pm)}$ act on graph states\footnote{For the special case when $\abs{N_v}=0$, a measurement in the $X$-basis does not change the graph state since this is then $\ket{G}=\ket{+}_v\otimes\ket{G\setminus v}$.}
\begin{align}
    P_v^{(Z,\pm)}\ket{G} &= \frac{1}{2}\ket{Z,\pm}_v\otimes U_v^{(Z,\pm)}\ket{G\setminus v} \label{eq:meas_Z}\\
    P_v^{(Y,\pm)}\ket{G} &= \frac{1}{2}\ket{Y,\pm}_v\otimes U_v^{(Y,\pm)}\ket{\tau_v(G)\setminus v} \label{eq:meas_Y}\\
    P_v^{(X,\pm)}\ket{G} &= \frac{1}{2}\ket{X,\pm}_v\otimes U_e^{(X,\pm)}\ket{T_{e}(G)\setminus v} \quad\text{ if }\abs{N_v}>0\label{eq:meas_X}
\end{align}
where $e$ is an edge of $G$ incident on the vertex $v$.
Choosing a different edge $e'$ incident on $v$ gives a single-qubit Clifford-equivalent graph state, i.e. $\ket{T_{e'}(G)\setminus v}\sim_\mathrm{LC}\ket{T_e(G)\setminus v}$.
The operation $T_{(u,v)}$ is called a pivot and is defined as $T_{(u,v)}=\tau_v\circ\tau_u\circ\tau_v$.
The pivot can simply be specified by an undirected edge since
\begin{equation}
    \tau_v\circ\tau_u\circ\tau_v(G)=\tau_u\circ\tau_v\circ\tau_u(G)\quad \text{if }(u,v)\in E(G)
\end{equation}
as shown in~\cite{Bouchet1988a}.
The operators $U_v^{(P,\pm)}$ are sequences of single-qubit Clifford operations and take the post-measurement state to a graph state.
The exact form of these correction operators can be found in~\cite{Hein2004} but we will only need the ones for measurements in the standard ($Z$-) basis, which are given by
\begin{equation}\label{eq:corr}
    U_v^{(Z,+)}=\mathbb{I}_v,\quad U_v^{(Z,-)}=\prod_{u\in N_v}Z_u.
\end{equation}

Since the correction operators are sequences of single-qubit Clifford operations it is therefore the case that
\begin{equation}\label{eq:qminors}
    \ket{G\setminus v},\quad \ket{\tau_v(G)\setminus v},\quad \ket{T_e(G)\setminus v}
\end{equation}
are all qubit-minors of $\ket{G}$.
Furthermore, Bouchet proved the following.
\begin{lemma}[Bouchet, (9.2) in~\cite{Bouchet1988a}]\label{lem:single_vertex-minor}
    If $G\sim_\mathrm{LC}G'$ then $G'\setminus v$ is LC-equivalent to $G\setminus v$, $\tau_v(G)\setminus v$ or $T_e(G)\setminus v$, where $e$ is some fixed edge incident on $v$ in $G$.
\end{lemma}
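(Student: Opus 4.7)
The plan is to translate this graph-theoretic statement into a question about graph states via theorem~\ref{thm:nest}, and then apply the explicit formulas~\eqref{eq:meas_Z}--\eqref{eq:meas_X} for single-qubit Pauli measurements on graph states. Since $G\sim_\mathrm{LC}G'$, theorem~\ref{thm:nest} gives $\ket{G'}=C\ket{G}$ for some $C=\bigotimes_{u\in V}C_u$ with each $C_u$ a single-qubit Clifford. The Pauli group $\mathcal{P}$ is normalized by $\mathcal{C}$, so $C_v^\dagger Z_v C_v=\epsilon P_v$ for some sign $\epsilon=\pm 1$ and some Pauli $P\in\{X,Y,Z\}$. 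This is the only place where the choice of $v$ enters, and it tells us which of the three cases we will land in.

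Next I would compare two ways of computing the post-measurement state on $V\setminus v$ after a $Z$-projection on qubit $v$ in $\ket{G'}$. Pushing $C$ through the projector yields
\begin{equation}
    P_v^{(Z,+)}\ket{G'} \;=\; C\cdot P_v^{(P,s)}\ket{G}
\end{equation}
for some sign $s\in\{+,-\}$ determined by $\epsilon$. Applying~\eqref{eq:meas_Z} to the left-hand side shows that the residual state on $V\setminus v$ is LC-equivalent to $\ket{G'\setminus v}$. Applying~\eqref{eq:meas_Z},~\eqref{eq:meas_Y} or~\eqref{eq:meas_X} to the right-hand side shows, depending on whether $P=Z$, $Y$ or $X$, that the residual state is LC-equivalent to $\ket{G\setminus v}$, $\ket{\tau_v(G)\setminus v}$ or $\ket{T_e(G)\setminus v}$, where $e$ can be taken to be any edge of $G$ incident on $v$. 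The outer Clifford $C$ restricted to $V\setminus v$ is a product of single-qubit Cliffords and hence preserves LC-equivalence of graph states, and the correction operator $U_v^{(P,s)}$ for a possibly unlucky outcome $s=-$ is again a single-qubit Clifford, so both sides of the equation above give the same LC-class on $V\setminus v$. Concluding with theorem~\ref{thm:nest} in the other direction lifts this to LC-equivalence of the underlying graphs.

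The main obstacle is the degenerate case $P=X$ with $\abs{N_v^{(G)}}=0$, because~\eqref{eq:meas_X} requires an edge incident on $v$. In this case $v$ is isolated in $G$, so $\ket{G}=\ket{+}_v\otimes\ket{G\setminus v}$; any single-qubit Clifford applied at $v$ followed by a $Z$-measurement simply discards $v$, giving $\ket{G'\setminus v}\sim_\mathrm{LC}\ket{G\setminus v}$ directly. A minor bookkeeping point is to verify that the conclusion does not depend on the choice of incident edge $e$, which is exactly the statement $\ket{T_{e'}(G)\setminus v}\sim_\mathrm{LC}\ket{T_e(G)\setminus v}$ recorded just after~\eqref{eq:meas_X}. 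With these two remarks the three cases exhaust the possibilities, which is precisely the content of the lemma.
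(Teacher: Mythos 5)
Your argument is correct, but it is a genuinely different route from the paper's: the paper does not prove lemma~\ref{lem:single_vertex-minor} at all, it imports it as a black box from Bouchet's purely combinatorial work on isotropic systems, whereas you re-derive it on the quantum side from the measurement update rules~\eqref{eq:meas_Z}--\eqref{eq:meas_X} and theorem~\ref{thm:nest}. Your derivation is sound: conjugating $P_v^{(Z,+)}$ by the local Clifford $C$ with $\ket{G'}=C\ket{G}$ yields a Pauli projector $P_v^{(P,s)}$ on $\ket{G}$, the two evaluations of the post-measurement state agree as product vectors across the cut $\{v\}\,|\,V\setminus v$, and matching the factors on $V\setminus v$ gives $\ket{G'\setminus v}\sim_\mathrm{LC}\ket{H\setminus v}$ for $H\in\{G,\tau_v(G),T_e(G)\}$; theorem~\ref{thm:nest} then transfers this back to the graphs. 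You correctly flag the one degenerate case (isolated $v$ with $P=X$, where $P_v^{(X,-)}\ket{G}=0$) and the edge-independence of the pivot up to LC-equivalence. The logical direction is inverted relative to the paper --- the paper uses Bouchet's graph lemma to establish quantum statements such as lemma~\ref{lem:single_qubit-minor}, while you derive the graph lemma from quantum facts --- but this is not circular, since the Hein et al.\ projector formulas and Van den Nest's theorem are established within the stabilizer formalism independently of Bouchet (9.2). What the paper's citation buys is a clean separation of the combinatorial and quantum layers; what your proof buys is a self-contained argument accessible to a quantum-information reader that avoids isotropic systems entirely.
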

We therefore have the following lemma.
\begin{lemma}\label{lem:single_qubit-minor}
    Let $\ket{G}$ be a graph state, $v\in V(G)$ be a vertex and $e\in E(G)$ be an edge incident on $v$.
    Furthermore, assume that $\ket{G'}$ is a qubit-minor of $\ket{G}$, where $V(G')=V(G)\setminus v$ and that $G'$ has no vertices of degree zero.
    Then $\ket{G'}$ is single-qubit Clifford-equivalent to at least one of the three states in equation~\eqref{eq:qminors}.
\end{lemma}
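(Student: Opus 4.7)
The plan is to reduce the $\mathrm{LC}+\mathrm{LPM}+\mathrm{CC}$ protocol witnessing $\ket{G'}<\ket{G}$ to a simple canonical form, apply the measurement equations~\eqref{eq:meas_Z}--\eqref{eq:meas_X} to a single graph state, and then transport the resulting graph back to the three canonical candidates using Bouchet's Lemma~\ref{lem:single_vertex-minor}. Since $V(G')=V(G)\setminus v$, the qubit $v$ is the only one that ends up in the junk register, so there must be at least one Pauli measurement on $v$, and any Clifford applied to $v$ after that first measurement only rewrites the (already disentangled) junk. Single-qubit Cliffords on qubits other than $v$ commute with a measurement on $v$, while Cliffords on $v$ preceding the measurement merely conjugate the measured Pauli into another Pauli. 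Hence without loss of generality the protocol has the form
\begin{equation}
\ket{G}\xrightarrow{C_1}\ket{\psi}\xrightarrow{M_v^{(P)}}\ket{\phi}\xrightarrow{C_2}\ket{G'}\otimes\ket{\mathrm{junk}}_v,
\end{equation}
with $C_1,C_2$ sequences of single-qubit Cliffords and $P\in\{X,Y,Z\}$.

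Next I would use the fact recalled in Section~\ref{sec:graphstates} that every stabilizer state is LC-equivalent to a graph state, to write $\ket{\psi}\sim_\mathrm{LC}\ket{H}$ for some graph $H$ on $V(G)$. By Theorem~\ref{thm:nest} this gives $H\sim_\mathrm{LC}G$. Because the LC-equivalence $\ket{\psi}\sim_\mathrm{LC}\ket{H}$ is realized by single-qubit Cliffords, measuring $P$ on $\ket{\psi}$ is the same, up to a single-qubit Clifford correction, as measuring some Pauli $P'\in\{X,Y,Z\}$ on $\ket{H}$. Applying equations~\eqref{eq:meas_Z}--\eqref{eq:meas_X} to $\ket{H}$ then shows that, regardless of which $P'$ arises, the post-measurement state on $V(G)\setminus v$ is LC-equivalent to one of
\begin{equation}\label{eq:three_from_H}
\ket{H\setminus v},\quad \ket{\tau_v(H)\setminus v},\quad \ket{T_{e'}(H)\setminus v},
\end{equation}
where $e'$ is some edge of $H$ incident on $v$; the edge case $N_v^{(H)}=\emptyset$ is harmless since then all three candidates coincide with $\ket{H\setminus v}$. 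Since $C_2$ is a further sequence of single-qubit Cliffords, $\ket{G'}$ is LC-equivalent to whichever of the three graphs in~\eqref{eq:three_from_H} arises.

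Finally I would close the loop using Bouchet's Lemma~\ref{lem:single_vertex-minor}: each of $H$, $\tau_v(H)$, $T_{e'}(H)$ is LC-equivalent to $G$, so the lemma implies that each of $H\setminus v$, $\tau_v(H)\setminus v$ and $T_{e'}(H)\setminus v$ is in turn LC-equivalent to one of $G\setminus v$, $\tau_v(G)\setminus v$ or $T_e(G)\setminus v$. Composing this with the LC-equivalence of $\ket{G'}$ to one element of~\eqref{eq:three_from_H} yields the conclusion.

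The main obstacle I foresee is the initial canonicalisation: justifying carefully that any $\mathrm{LC}+\mathrm{LPM}+\mathrm{CC}$ protocol producing $\ket{G'}\otimes\ket{\mathrm{junk}}_v$ can be rearranged so that only a single Pauli measurement on $v$ occurs, with all other operations collapsed into single-qubit Cliffords before and after it. This uses that single-qubit Cliffords map single-qubit Paulis to single-qubit Paulis, that measurements on other qubits are not needed (those qubits must reappear in $\ket{G'}$), and that repeated measurements on $v$ after the first act on a disentangled register. The hypothesis that $G'$ has no vertices of degree zero is not needed in the core of the argument above; it is natural background technology ensuring that $\ket{G'}$ has no trivially disentangled component that could be realised by extra measurements unrelated to the graph structure of $G$.
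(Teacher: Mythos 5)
Your proof is correct and follows essentially the same route as the paper's: canonicalise the witnessing protocol to (single-qubit Cliffords) $\to$ (one Pauli measurement on $v$) $\to$ (single-qubit Cliffords), translate the Clifford layers into local complementations via Theorem~\ref{thm:nest}, and invoke Bouchet's Lemma~\ref{lem:single_vertex-minor} to land on one of the three candidate graphs; the only difference is that the paper absorbs the measurement basis into the preceding Cliffords so that a single $Z$-measurement and one application of Lemma~\ref{lem:single_vertex-minor} suffice, whereas you keep all three measurement types and apply the lemma to each of $H\setminus v$, $\tau_v(H)\setminus v$, $T_{e'}(H)\setminus v$. One correction to your closing remark: the hypothesis that $G'$ has no degree-zero vertices is \emph{not} dispensable --- it is exactly what licenses your step ``measurements on other qubits are not needed,'' since a Pauli measurement permanently disentangles the measured qubit from the rest under $\mathrm{LC}+\mathrm{LPM}+\mathrm{CC}$, which contradicts every $u\neq v$ having to end up entangled inside $\ket{G'}$; if some $u$ had degree zero in $G'$, a witnessing protocol could legitimately measure $u$ as well and your canonical form would no longer follow.
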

\begin{proof}
    Since $\ket{G'}$ is a qubit-minor of $\ket{G}$ we know that there exists a sequence $\mathcal{W}$ of single-qubit Clifford operations, single-qubit Pauli measurements and classical communication that takes $\ket{G}$ to $\ket{G'}$, by definition.
    Any single-qubit Pauli measurement on a qubit $u$ gives a product states between qubit $u$ and the rest of the qubits, see equations~\eqref{eq:meas_Z}-\eqref{eq:meas_X}.
    The sequence $\mathcal{W}$ cannot therefore contain a single-qubit Pauli measurement on a qubit $u$, different from $v$, since $u$ has non-zero degree in the graph $G'$ by assumption.
    Without loss of generality we can in fact assume that $\mathcal{W}$ is a sequence of single-qubit Clifford operations followed by a measurement of $v$ in the standard basis and then by another sequence of single-qubit Clifford operations.
    The reason why it is sufficient to consider a measurement in the standard basis is that any other Pauli measurement can be simulated by performing some single-qubit Clifford operation followed by a measurement in the standard basis.
    Assume that the sequence of single-qubit Clifford operations before (after) the measurement is described by the sequences of local complementations $m$ ($m'$), which exists due to theorem~\ref{thm:nest}.
    We therefore have that
    \begin{equation}
        \tau_{m'}(\tau_m(G)\setminus v)=G'.
    \end{equation}
    Using lemma~\ref{lem:single_vertex-minor} we have that $\tau_m(G)\setminus v$, and therefore $G'$, is LC-equivalent to either $G\setminus v$, $\tau_v(G)\setminus v$ or $T_e(G)\setminus v$.
    Finally, by theorem~\ref{thm:nest} the lemma follows.
\end{proof}

\subsection{Vertex-minors}
As mentioned, the question of whether $\ket{G'}$ is a qubit-minor of $\ket{G}$ is equivalent to whether $G'$ is a vertex-minor of $G$.
Vertex-minors were introduced by Bouchet in~\cite{Bouchet1988a} but by the name of $l$-reductions.
\begin{definition}[vertex-minor]
    Let $G$ be a graph. $G'$ is called a vertex-minor of $G$ if it can be reached by some sequence of local complementations and vertex-deletions.
    Equivalently, $G'$ is called a vertex-minor of $G$ if there exists a sequence of vertices $m$ such that
    \begin{equation}\label{eq:LC_VM}
        \tau_m(G)[V(G')]=G'
    \end{equation}
    If $G'$ is a vertex-minor of $G$, we denote this as
    \begin{equation}
        G'<G
    \end{equation}
\end{definition}
In the previous sections we have seen that single-qubit Clifford operations that take graph states to graph states can be seen as local complementations on the corresponding graph and similarly for single-qubit Pauli measurements and vertex-deletions.
The relation between qubit-minors and vertex-minors is captured by the following theorem.
\begin{theorem}\label{thm:QMVM}
    Let $\ket{G}$ and $\ket{G'}$ be two graph states such that no vertex in $G'$ has degree zero.
    $\ket{G'}$ is then a qubit-minor of $\ket{G}$ if and only if $G'$ is a vertex-minor of $G$, i.e.
    \begin{equation}
        \ket{G'}<\ket{G}\quad\Leftrightarrow\quad G'<G.
    \end{equation}
\end{theorem}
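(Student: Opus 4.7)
The plan is to establish both directions. The reverse implication ($G' < G \Rightarrow \ket{G'} < \ket{G}$) is direct via the dictionaries developed in sections~\ref{subsec:cliffords} and~\ref{subsec:meas}: I would take a witness sequence of local complementations and vertex-deletions carrying $G$ to $G'$ and translate it term-by-term into an $\mathrm{LC}+\mathrm{LPM}+\mathrm{CC}$ sequence by replacing each local complementation $\tau_v$ with the Clifford $U_v^{(G)}$ from equation~\eqref{eq:LCU}, and each vertex-deletion with a standard-basis measurement on $v$ followed by the correction operator $U_v^{(Z,\pm)}$ of equation~\eqref{eq:corr}. The composed circuit takes $\ket{G}$ to $\ket{G'}\otimes\ket{\mathrm{junk}}$, which by definition means $\ket{G'}<\ket{G}$.

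For the forward implication I would induct on $k=\abs{V(G)}-\abs{V(G')}$, the number of measured qubits. The base case $k=0$ is immediate from theorem~\ref{thm:nest}: the witnessing sequence consists only of Cliffords, hence $\ket{G}\sim_{\mathrm{LC}}\ket{G'}$, which gives $G\sim_{\mathrm{LC}}G'$, and this is already a vertex-minor relation since no deletions are needed.

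For the inductive step I would cut the witness sequence at the moment of its first measurement. Up to that moment only Cliffords have been applied, producing some graph state $\ket{\tilde{G}}\sim_{\mathrm{LC}}\ket{G}$. Without loss of generality the measurement is in the standard basis (any Pauli measurement is equivalent to a Clifford followed by a $Z$-measurement), so equation~\eqref{eq:meas_Z} together with correction~\eqref{eq:corr} shows that the post-measurement graph state is LC-equivalent to $\ket{\tilde{G}\setminus v}$; had I left the other bases in, \eqref{eq:meas_Y}--\eqref{eq:meas_X} would give the alternative candidates $\ket{\tau_v(\tilde{G})\setminus v}$ or $\ket{T_e(\tilde{G})\setminus v}$. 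In every case theorem~\ref{thm:nest} produces a graph $H$ representing the post-measurement state with $H<G$, since $\tilde{G}\sim_{\mathrm{LC}}G$ and each of the three candidate graphs is obtained from $\tilde{G}$ by local complementations and a single vertex-deletion. The remainder of the witness sequence carries $\ket{H}$ to $\ket{G'}\otimes\ket{\mathrm{junk}}$ using only $k-1$ measurements, so the induction hypothesis gives $G'<H$. Concatenating the two sequences of local complementations and vertex-deletions then yields $G'<G$.

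The main technical subtlety is handling the case in which the chosen qubit $v$ is isolated in $\tilde{G}$, so that equation~\eqref{eq:meas_X} does not apply: in that case $\ket{\tilde{G}}$ factorises as $\ket{+}_v\otimes\ket{\tilde{G}\setminus v}$, any Pauli measurement of $v$ leaves the remaining graph state invariant up to a local Clifford, and one can simply take $H=\tilde{G}\setminus v$. A related point is that intermediate graphs may contain degree-zero vertices even though $G'$ does not; however, the argument invokes only the measurement formulas and theorem~\ref{thm:nest} at each step, and never lemma~\ref{lem:single_qubit-minor} directly, so its hypothesis about isolated vertices is not needed along the induction.
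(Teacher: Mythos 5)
Your overall architecture matches the paper's: the reverse direction is the same term-by-term translation of local complementations into the Cliffords $U_v^{(G)}$ and vertex-deletions into $Z$-measurements with corrections, and the forward direction reduces each Pauli measurement to one of the three graph operations $G\setminus v$, $\tau_v(G)\setminus v$, $T_e(G)\setminus v$ and composes — just packaged as an explicit induction rather than an appeal to the argument of lemma~\ref{lem:single_qubit-minor}. There is, however, a genuine gap in the inductive step, and it sits exactly where the theorem's hypothesis is supposed to do its work. When you cut the witness at its first measurement, say of qubit $v$, and pass to a graph $H$ on $V(\tilde{G})\setminus\{v\}$, the induction hypothesis is only applicable if $V(G')\subseteq V(H)$, i.e.\ if $v\notin V(G')$. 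Nothing in your write-up establishes this, and your closing remark that the hypothesis about isolated vertices ``is not needed along the induction'' suggests you believe it plays no role --- but this is precisely where it enters. The paper's argument (in the proof of lemma~\ref{lem:single_qubit-minor}) is: a Pauli measurement puts the measured qubit into a product state with the rest, and since all remaining operations are single-qubit, that qubit can never be re-entangled; if $v\in V(G')$ it would therefore end up with degree zero in $G'$, contradicting the hypothesis. Without this sentence your induction hypothesis cannot be invoked. (Your observation that \emph{intermediate} graphs may have isolated vertices is fine, but orthogonal to this point.)

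A second, smaller issue: you induct on ``$k=\abs{V(G)}-\abs{V(G')}$, the number of measured qubits,'' but these two quantities need not coincide --- for instance if $G$ is disconnected and $V(G')$ is a union of components, no measurements are required at all, and your base case ``$k=0\Rightarrow\ket{G}\sim_{\mathrm{LC}}\ket{G'}$'' is false as stated since the two states live on different numbers of qubits. The clean fix is to induct on the number of measurements actually present in the witness and, at the end, delete any unmeasured qubits of $V(G)\setminus V(G')$ by passing to the induced subgraph on $V(G')$ (itself a sequence of vertex-deletions, so the vertex-minor conclusion survives); this is what the paper's formulation $\tau_m(G)[V(G')]=G'$ absorbs automatically. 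With these two repairs your proof is correct and coincides in substance with the paper's.
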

\begin{proof}
    Assume first that $\ket{G'}$ is a qubit-minor of $\ket{G}$.
    By the same arguments as in the proof of lemma~\ref{lem:single_qubit-minor} we then have that there exists a sequence of vertices $m$ such that
    \begin{equation}
        \tau_m(G)[V(G')]=G'
    \end{equation}
    and by definition that $G'$ is a vertex-minor of $G$.
    Assume now on the other hand that $G'$ is a vertex-minor of $G$, i.e. that $\tau_m(G)[V(G')]=G'$ for some $m$.
    We can then go from $\ket{G}$ to $\ket{G'}$ by simply performing the single-qubit Clifford operations corresponding to the sequence of local complementation specified by $m$ and then measure the qubits $V(G)\setminus V(G')$ in the standard basis.
    If the correct corrections, i.e. $U_v^{(Z,\pm)}$, are applied after the measurements, the state $\ket{G'}$ is reached.
\end{proof}
So to check whether a graph state has a certain qubit-minor we can check if the corresponding graph has a certain vertex-minor.
Note that one can also include the case where $G'$ has vertices of degree zero.
Let's denote the vertices of $G'$ which have degree zero as $U$.
We then have that
\begin{equation}
    \ket{G'}<\ket{G}\quad\Leftrightarrow\quad G'[V(G)\setminus U]<G.
\end{equation}

\subsection{Rank-width}\label{app:rwd}
In this section we introduce the notion of rank-width, which is a complexity measure of a graph.
It is in some ways similar to the tree-width, introduced in~\cite{Robertson1986}.
The tree-width captures essentially how tree-like the graph is.
This is useful for finding algorithms for problems on graphs of bounded tree-width, motivated by the fact that many graph problems are easy on trees.
More on algorithms for problems on graphs of bounded tree-width can be found in~\cite{Downey1999}.
Rank-width, compared to tree-width, captures a larger class of graphs with similar complexity.
For example, the complete graph has very low complexity, due to its highly symmetric nature, but the tree-width is in this case maximal.
On the other hand the rank-width is one for both trees and complete graphs.
In fact, it turns out that the graphs of rank-width one are exactly the distance-hereditary graphs, see~\cite{Oum2005}.

We start by defining the cut-rank of a graph.
To do this we will use the following notation for a graph $G$ with vertices $V$ and adjacency matrix $\Gamma$ and two subsets of the vertices $A,B\subseteq V$; $\Gamma[A,B]$ is the $\abs{A}\times\abs{B}$-matrix describing the connections between the sets $A$ and $B$.
So, for $a\in A$ and $b\in B$, the element $(\Gamma[A,B])_{ab}$ is $1$ if $(a,b)$ is an edge in $G$ and $0$ otherwise.
\begin{definition}[cut-rank]
    Let's assume that $A$ is a subset of the vertices $V$ of some graph $G$ with adjacency matrix $\Gamma$.
    The cut-rank $\cutrk_A(G)$ of $G$ with respect to $A$, is then defined as
    \begin{equation}
        \cutrk_A(G)\equiv\rank_{\mathbb{F}_2}(\Gamma[A,V\setminus A]),
    \end{equation}
    where $\rank_{\mathbb{F}_2}$ is the rank over the finite field of order two.
\end{definition}
Note that the cut-rank is symmetric in the sense that
\begin{equation}
    \cutrk_A(G)=\rank_{\mathbb{F}_2}(\Gamma[A,V\setminus A])=\rank_{\mathbb{F}_2}(\Gamma[V\setminus A,A]^\top)=\cutrk_{V\setminus A}(G).
\end{equation}
Interestingly the cut-rank with respect to $A$ of a graph $G$ is in fact equal to the Schmidt-rank of the state $\ket{G}$ with respect to the bipartition $(A,V\setminus A)$.\footnote{This is proven in~\cite{Hein2006}, see proposition 10.}

Next we define what is called a rank-decomposition of a graph.
\begin{definition}[rank-decomposition]
    A rank-decomposition of a graph $G$ is a pair $\mathcal{R}=(\mathcal{T},\mu)$, where $\mathcal{T}$ is a subcubic tree and $\mu$ is a bijection $\mu:V(G)\rightarrow\{l:l\text{ is a leaf of }\mathcal{T}\}$.
    A subcubic tree is a tree with at least two vertices and each vertex has degree less or equal to $3$.
    Any edge $e$ in $\mathcal{T}$ splits the tree into two connected components upon deletion and therefore induces a partition $(A_e,B_e)$ of the leaves.
    The width of an edge $e$ of the subcubic tree is defined as the cut-rank of the corresponding partition.
    Furthermore the width of the rank-decomposition is defined as the maximum width over all edges, i.e.
    \begin{equation}
        \width_\mathcal{R}(G)\equiv\max_{e\in E(\mathcal{T})}\cutrk_{\mu^{-1}(A_e)}(G).
    \end{equation}
    To simplify notation we write the cut-rank induced by a rank-decomposition $(\mathcal{T},\mu)$ and an edge $e$ as
    \begin{equation}
        \cutrk_{\mu^{-1}(\mathcal{T},e)}(G)\equiv \cutrk_{\mu^{-1}(A_e)}(G).
    \end{equation}
\end{definition}
This allows us to define the rank-width of a graph.
\begin{definition}[rank-width]\label{def:rankwidth}
    The rank-width $\rwd(G)$ of a graph $G$ is the minimum width over all rank-decompositions, i.e.
    \begin{equation}
        \rwd(G)\equiv\min_\mathcal{R}\width_\mathcal{R}(G)=\min_{(\mathcal{T},\mu)} \max_{e\in E(\mathcal{T})}\cutrk_{\mu^{-1}(\mathcal{T},e)}(G).
    \end{equation}
\end{definition}
The rank-width of the graph $G$ is related to the entanglement of the state $\ket{G}$, although as a relatively unknown entanglement monotone.
In~\cite{VandenNest2007} the corresponding entanglement monotone is called the \textit{Schmidt-rank width} and is defined for general quantum states.
For graph states, the Schmidt-rank width of the state and the rank-width of the corresponding graph coincide.
There they also give an interpretation of the Schmidt-rank width as a quantifier for the optimal description of the state using a tree tensor network.

\section{A non-efficient but general algorithm}\label{sec:brute}
Here we describe an algorithm that decides if $G'$ is a vertex-minor of $G$ and returns a sequence of local complementations $m$ such that $\tau_m(G)[V(G')]\sim_\mathrm{LC}G'$ if such a sequence exists.
This algorithm works for any $G$ and $G'$ and has a running time of $\mathcal{O}(3^{n-k}(k^4+(n-k)n^2)$, where $k=\abs{G'}$ and $n=\abs{G}$.
Obviously this algorithm is not efficient, due to the exponential scaling in the size-difference of the graphs, but is still useful for smaller graphs or when $n-k$ is bounded.
For finite $n$, the algorithm described in this section is also a useful benchmark for the efficient algorithm described in section~\ref{sec:courcelle} for graphs of bounded rank-width.
To prove that the algorithm is correct, we first prove theorem~\ref{thm:multi_vertex-minor}.

\begin{algorithm}[H]
    \caption{Non-efficient algorithm that decides if $G'<G$. \newline Input: ($G$,$G'$). \newline Output: A sequence $m$ such that $\tau_m(G)[V(G')]\sim_\mathrm{LC}G'$ if $G'<G$.\newline \hbox{}\hspace{1.2cm}$\perp$ \hspace{6.23cm}if $G'\nless G$.}\label{alg:vertex-minor}
    \begin{algorithmic}[1]
        \Function{is\_vm}{$G$,$G'$}
            \If{$V(G')\nsubseteq V(G)$}
                \State \textbf{return} $\perp$
            \EndIf
            \If{$V(G)=V(G')$}
                \If{$G\sim_\mathrm{LC}G'$}
                    \State \textbf{return} []\Comment{Return an empty sequence}
                \Else
                    \State \textbf{return} $\perp$
                \EndIf
            \Else
                \State Let $v$ be a vertex in $V(G)\setminus V(G')$
                \State Let $m_Z=\textproc{is\_vm}(G\setminus v)$
                \If{$m_Z\neq\perp$}
                    \State \textbf{return} $m_Z$
                \Else
                    \State Let $m_Y=\textproc{is\_vm}(\tau_v(G)\setminus v)$
                    \If{$m_Y\neq\perp$}
                        \State \textbf{return} $m_Y\parallel [v]$\Comment{Concatenate $[v]$ to $m_Y$ and return.}
                    \Else
                        \State Let $u$ be a vertex incident to $v$ in $G$
                        \State Let $m_X=\textproc{is\_vm}(T_{(v,u)}(G)\setminus v)$
                        \If{$m_X\neq\perp$}
                            \State \textbf{return} $m_X\parallel [u,v,u]$\Comment{Concatenate $[u,v,u]$ to $m_X$ and return.}
                        \Else
                            \State \textbf{return} $\perp$
                        \EndIf
                    \EndIf
                \EndIf
            \EndIf
        \EndFunction
    \end{algorithmic}
\end{algorithm}

As seen in the previous section, when making a Pauli $X$ measurement on a qubit $v$, the choice of the edge $e$ incident on $v$ gives different graph states that are single-qubit Clifford equivalent to the post-measurement state.
To simplify the results of this section we introduce the following graph operation which allows us to not have to deal with different choices of edges, incident on the qubit being measured.
\begin{definition}
    Let $v$ be a vertex in the graph $G$.
    We define $T_v$ as the following operation
    \begin{equation}
        T_v(G)=\begin{cases}T_{e_v}(G) & \quad\text{if }\abs{N_v}>0 \\ G & \quad\text{if }\abs{N_v}=0\end{cases}
    \end{equation}
    where $e_v$ is an edge incident on $v$ chosen in some consistent way.
    For example we could assume that the vertices of $G$ are ordered and that $e_v=(v,\min(N_v))$.
    The specific choice will not matter but importantly $e_v$ only depends on $G$ and $v$, and the same therefore holds for $T_v(G)$.
\end{definition}
We are now ready to prove a generalization of lemma~\ref{lem:single_qubit-minor}, which algorithm~\ref{alg:vertex-minor} is built on.
\begin{theorem}\label{thm:multi_vertex-minor}
    Let $G$ and $G'$ be two graphs and $U$ be the set $V(G)\setminus V(G')=\{v_1,\dots,v_{n-k}\}$.
    Furthermore, let $\mathcal{P}_U$ denote the set of graph operations
    \begin{equation}
        \mathcal{P}_U=\{P_{v_{n-k}}\circ\dots\circ P_{v_1}\;:\;P_{v}\in\{({\_\,})\setminus v,\,\tau_v({\_\,})\setminus v,\,T_v({\_\,})\setminus v\}\}
    \end{equation}
    Then we have that
    \begin{equation}
        G'<G\quad\Leftrightarrow\quad \exists P\in\mathcal{P}_U\;:\;G'\sim_\mathrm{LC}P(G).
    \end{equation}
\end{theorem}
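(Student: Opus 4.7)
The plan is to prove each direction separately. The backward implication is immediate from the construction of $\mathcal{P}_U$, while the forward implication proceeds by induction on $\abs{U}$, iteratively peeling off one vertex of $U$ using Bouchet's single-vertex reduction lemma~\ref{lem:single_vertex-minor}.

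For the backward direction, if $G'\sim_\mathrm{LC}P(G)$ for some $P\in\mathcal{P}_U$, then each factor $P_{v_i}$ is a composition of at most three local complementations followed by a single vertex-deletion, so $P(G)$ is built from $G$ using only local complementations and vertex-deletions. This gives $P(G)<G$ directly from the definition of a vertex-minor, and hence $G'<G$.

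For the forward direction, I induct on $\abs{U}$. The base case $\abs{U}=0$ reduces to $G'\sim_\mathrm{LC}G$, witnessed by the empty composition. For the inductive step, suppose $G'<G$ with $\abs{U}\geq 1$, so there is a sequence of vertices $m$ with $\tau_m(G)[V(G')]=G'$. Let $H_1=\tau_m(G)\setminus v_1$; since $v_1\notin V(G')$, taking the induced subgraph on $V(G')$ commutes with deleting $v_1$, so $H_1[V(G')]=G'$, giving $G'<H_1$ via pure deletions. Because $\tau_m(G)\sim_\mathrm{LC}G$, Bouchet's lemma~\ref{lem:single_vertex-minor} yields $H_1\sim_\mathrm{LC}P_{v_1}(G)$ for some $P_{v_1}\in\{({\_\,})\setminus v_1,\,\tau_{v_1}({\_\,})\setminus v_1,\,T_{e}({\_\,})\setminus v_1\}$ with $e$ incident on $v_1$; the remark from section~\ref{subsec:meas} that different edges incident on $v_1$ give LC-equivalent pivot-then-delete results lets me replace the edge-parametrized $T_e$ by the canonical $T_{v_1}$. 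Since LC-equivalent graphs share the same vertex-minors, $G'<P_{v_1}(G)$, and $\abs{V(P_{v_1}(G))\setminus V(G')}=\abs{U}-1$. The induction hypothesis then produces $P'\in\mathcal{P}_{U\setminus\{v_1\}}$ with $G'\sim_\mathrm{LC}P'(P_{v_1}(G))$, and $P=P'\circ P_{v_1}\in\mathcal{P}_U$ completes the argument.

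The main conceptual obstacle is ensuring that the long sequence $m$ of local complementations entangled with the later deletions can be canonicalized so that the net effect on $v_1$ alone is a single three-way choice of operation independent of the remaining sequence; this is precisely what Bouchet's lemma provides. Apart from this, only routine bookkeeping and the harmless edge-choice freedom in $T_e$ remain, so the theorem amounts to an iterated application of Bouchet's lemma combined with LC-invariance of the vertex-minor relation.
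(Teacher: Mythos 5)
Your proof is correct and follows essentially the same route as the paper: both directions are handled the same way, with the forward implication proved by induction on $\abs{U}$ using Bouchet's lemma~\ref{lem:single_vertex-minor} together with LC-invariance of the vertex-minor relation. The only cosmetic difference is that you peel off the \emph{first} deleted vertex (applying Bouchet's lemma to $\tau_m(G)\sim_\mathrm{LC}G$ and then invoking the induction hypothesis on the smaller instance), whereas the paper peels off the \emph{last} one (applying the induction hypothesis to the intermediate graph $\tilde{G}=\tau_m(G)[V(G')\cup\{v\}]$ and then Bouchet's lemma for the final vertex); your explicit handling of the edge-choice freedom in $T_e$ versus the canonical $T_{v_1}$ is a small point the paper glosses over.
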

\begin{proof}
    If there exists a $P$ in $\mathcal{P}_U$ such that $G'\sim_\mathrm{LC}P(G)$ then we clearly have that $G'<G$, since any such $P$ is some sequence of local complementations and vertex-deletions.
    Assume now that $G'<G$.
    We will prove by induction on $n-k$ that there exists a $P$ in $\mathcal{P}_U$ such that $G'\sim_\mathrm{LC}P(G)$.
    For $n-k=1$ this follows directly from lemma~\ref{lem:single_vertex-minor}.
    Assume therefore that it is true for $n-k=l$.
    We now show that this implies that it is also true for $n-k=l+1$.
    Since $G'<G$ we know that $\tau_m(G)[V(G')]=G'$ for some $m$.
    Let $v$ be a vertex in $V(G')$ and consider the graph $\tilde{G}=\tau_m(G)[V(G')\cup\{v\}]$.
    Note that $G'=\tilde{G}\setminus v$.
    Clearly we have that $G'<\tilde{G}<G$ and by the induction assumption we know that
    \begin{equation}
        \exists P\in \mathcal{P}_{U\setminus \{v\}}\;:\;\tilde{G}\sim_\mathrm{LC}P(G).
    \end{equation}
    Then from lemma~\ref{lem:single_vertex-minor} we know that $G'=\tilde{G}\setminus v$ is LC-equivalent to at least one of the following graphs
    \begin{equation}
        P(G)\setminus v,\quad \tau_v(P(G))\setminus v,\quad T_v(P(G))\setminus v
    \end{equation}
    and the theorem follows.
\end{proof}
From theorem~\ref{thm:multi_vertex-minor} we see that to check if $G'$ is a vertex-minor of $G$ it is sufficient to check if one of the graphs in
\begin{equation}\label{eq:set_candidates}
    \left\{P(G)\;:\;P\in\mathcal{P}_{V(G)\setminus V(G')}\right\}
\end{equation}
are LC-equivalent to $G'$.
Note that there are possibly $3^{\abs{V(G)\setminus V(G')}}$ graphs in the set in equation~\eqref{eq:set_candidates} to check.
As mentioned earlier it is possible to check whether two graphs are LC-equivalent in time $\mathrm{O}(k^4)$, where $k$ is the size of the graphs.
The explicit algorithm for checking if $G'$ is a vertex-minor of $G$ is stated in algorithm~\ref{alg:vertex-minor} and theorem~\ref{thm:alg_VM} captures the proof that it is correct and what its running time is.

\begin{theorem}\label{thm:alg_VM}
    Algorithm~\ref{alg:vertex-minor} returns a sequence $m$ such that $\tau_m(G)[V(G')]\sim_\mathrm{LC}G'$ if $G'<G$ and returns $\perp$ if $G'\nless G$.
    Furthermore the runtime is $\mathcal{O}(3^{n-k}(k^4+(n-k)n^2)$, where $k=\abs{G'}$ and $n=\abs{G}$.
\end{theorem}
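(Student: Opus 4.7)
The plan is to prove correctness by induction on $n-k$ and then read off the runtime from the resulting recurrence.

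For the base case $n=k$, the algorithm invokes the LC-equivalence test of Bouchet~\cite{Bouchet1991}, which runs in $\mathcal{O}(k^4)$ and correctly returns the empty sequence if $G\sim_\mathrm{LC}G'$ and $\perp$ otherwise. For the inductive step, fix a vertex $v\in V(G)\setminus V(G')$. Theorem~\ref{thm:multi_vertex-minor}, specialised so that only $v$ is removed, tells us that $G'<G$ if and only if $G'$ is a vertex-minor of at least one of the three graphs $G\setminus v$, $\tau_v(G)\setminus v$, or $T_{(v,u)}(G)\setminus v$ (with $u$ a neighbour of $v$). Each of these has size $n-1$, so the inductive hypothesis applies in each branch, and the algorithm returns $\perp$ only when all three recursive calls fail, which is the correct answer.

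The remaining task is to justify the sequence returned when one branch succeeds. Suppose the recursive call on the reduced graph $H\in\{G\setminus v,\,\tau_v(G)\setminus v,\,T_{(v,u)}(G)\setminus v\}$ returns $m_\bullet$ with $\tau_{m_\bullet}(H)[V(G')]\sim_\mathrm{LC}G'$. The key identity is that for any sequence $m$ of vertex-labels not containing $v$ and any graph $K$ containing $v$,
\begin{equation}
\tau_m(K)[V(K)\setminus\{v\}]=\tau_m(K\setminus v),
\end{equation}
because local complementation at $u\neq v$ toggles an edge $(w_1,w_2)$ with $w_1,w_2\neq v$ based only on whether $w_1,w_2\in N_u$, independently of whether $v\in N_u$. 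Combining $m_\bullet$ with the elementary sequences $[\,]$, $[v]$, or $[u,v,u]$ (corresponding to the $Z$-, $Y$-, or $X$-branch respectively) in the appropriate order then yields a sequence $m$ on $V(G)$ satisfying $\tau_m(G)[V(G')]\sim_\mathrm{LC}G'$, as required.

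For the runtime, unroll the recurrence $T(n,k)\leq 3T(n-1,k)+\mathcal{O}(n^2)$ with $T(k,k)=\mathcal{O}(k^4)$, where the $\mathcal{O}(n^2)$ per internal node covers computing $\tau_v(G)$ or $T_{(v,u)}(G)$ and constructing the reduced graph. This yields $\mathcal{O}(3^{n-k}(k^4+(n-k)n^2))$. The main obstacle is neither the induction nor the recurrence but the sequence-lifting step: one must carefully verify that the chosen concatenation preserves the induced subgraph modulo LC-equivalence, and this is precisely what the commutation identity above supplies.
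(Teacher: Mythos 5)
Your proof is correct and follows essentially the same route as the paper: correctness is reduced to Theorem~\ref{thm:multi_vertex-minor} via the three-way branching on a single vertex $v\in V(G)\setminus V(G')$, and the runtime comes from unrolling the same recurrence $T(n,k)=3T(n-1,k)+\mathcal{O}(n^2)$ with $T(k,k)=\mathcal{O}(k^4)$. The one place you go beyond the paper is the explicit commutation identity $\tau_m(K\setminus v)=\tau_m(K)\setminus v$ for sequences $m$ avoiding $v$, which the paper leaves implicit when it asserts that the concatenated sequence returned by a successful branch satisfies $\tau_m(G)[V(G')]\sim_\mathrm{LC}G'$; this is a welcome extra justification rather than a deviation.
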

\begin{proof}
    We first prove that the algorithm is correct.
    Since the algorithm calls itself recursively with the three graphs as in in line 13, 17 and 21 it will generate all the graphs in equation~\eqref{eq:set_candidates}.
    That is, the graphs tested for LC-equivalence against $G'$ in line 6 are exactly the graphs in equation~\eqref{eq:set_candidates}.
    Furthermore, if at least one of the base-cases, i.e. line 5-10, return an empty sequence then the top-level call to the algorithm will return a sequence $m$ such that $\tau_m(G)[V(G')]\sim_\mathrm{LC}G'$.
    On the other hand, if all of the base-cases returns $\perp$ then the top-level call returns $\perp$.
    By theorem~\ref{thm:multi_vertex-minor} it follows that the algorithm is correct.

    Let's now consider the runtime of the algorithm.
    We assume that the graphs are given as their adjacency matrices.
    Let's denote the running time by $T(n,k)$.
    Picking a vertex $v$ from the set $V(G)\setminus V(G')$, as in line 12, can be done in time $\mathcal{O}(n)$.
    There are three recursive calls on line 13, 17 and 21, where also four local complementations are performed\footnote{Three local complementations for the pivot.}.
    Each local complementation can be done in quadratic time in the size of the graph.
    From the definition of local complementation, definition~\ref{def:LC}, we see that this can be done by adding the row of the adjacency matrix corresponding to the vertex where the local complementation is performed to the rows of its neighbors, where adding means vector-addition modulo 2.
    We therefore have that the running time of the full algorithm has the following recursive expression
    \begin{equation}
        T(n,k)=3T(n-1,k)+\mathcal{O}(n^2)
    \end{equation}
    where $T(k,k)=\mathcal{O}(k^4)$ from testing LC-equivalence of the graphs as the base-case.
    By induction we see that the running time can be expressed as
    \begin{equation}
        T(k+l,k)=3^l\mathcal{O}(k^4)+\sum_{i=0}^{l-1}3^i\mathcal{O}((k+l-i)^2)
    \end{equation}
    Evaluating the above expression for $l=n-k$ we get
    \begin{align}
        T(n,k)&=3^{n-k}\big(\mathcal{O}(k^4)+\sum_{i=0}^{n-k-1}\mathcal{O}((n-i)^2)\big)\\
              &=\mathcal{O}\big(3^{n-k}(k^4+(n-k)n^2\big)
    \end{align}
\end{proof}

\section{Constant time transformation}\label{sec:constant}
So far we have considered the task of finding the operations that take some graph state to its qubit-minor, but what is the best way to apply these operations to the state when they are found?
Assume that we have, by some classical (or quantum) algorithm, found a sequence of operations that takes us from the current state $\ket{G}$ to the target state $\ket{G'}$, i.e. $\ket{G'}$ is a qubit-minor of $\ket{G}$.
There are different ways to express these operations, for example as a sequence of single-qubit Clifford operations and single-qubit Pauli measurements or as a sequence of local complementations and vertex-deletions on the corresponding graph.
From the previous section we have also seen how these different representations can be mapped to each other.
Let's therefore assume that we have expressed the sequence of operations as local complementations $m$ followed by vertex-deletions of the vertices in $V(G)\setminus V(G')$.
The reason for doing this is that we can now perform the single-qubit Clifford operations corresponding to $m$ in parallel and then simultaneously measure all the qubits in $V(G)\setminus V(G')$ in the standard basis.
The simultaneous measurements in the standard basis are possible since the corrections $U_v^{(Z,\pm)}$ are either the identity or $Z$ on the neighbors of $v$ and do therefore not change the measurement basis of neighboring vertices, in contrast to Pauli $X$ and $Y$ measurement.

We still need to know what corrections that are needed, depending on the measurement outcomes of the qubits $V(G)\setminus V(G')$.
In appendix~\ref{app:correct} we show that vertex $v\in V(G')$ only need to apply $(Z_v)^{y_v}$, where
\begin{equation}
    y_v=\sum_{u\in N_v^{(G)}\setminus V(G')}x_u\pmod{2}
\end{equation}
and $x_u\in\{0,1\}$ is the measurement outcome of node $u$.
In other words, a $Z$ operations is applied to qubit $v$ if the parity of the measurement outcomes of the neighbors of $v$ (in $G$) is odd.
Otherwise, no correction is applied to qubit $v$.
We emphasize that the corrections of the qubit $v$ only depend on the measurement outcomes of the neighborhood of that qubit in the graph $G$.


Another advantage of only performing measurements in the standard basis is that in some cases it is possible to extract $\ket{G'}$ without destroying all the rest of the entanglement in the original state.
More specifically, consider the vertices that are adjacent to at least one vertex in $V(G')$ but which are not in $V(G')$ themselves.
These vertices are exactly the ones in the set
\begin{equation}
    N_{V(G')}=\left(\bigcup_{v\in V(G')}N_v^{(\tau_m(G))}\right)\setminus V(G').
\end{equation}
Assume that $N_{V(G')}\neq V(G)\setminus V(G')$.
Then the deletion of all the vertices in $N_{V(G')}$ from the graph $\tau_m(G)$ gives a graph with two connected components $G'=\tau_m(G)[V(G')]$ and $\tau_m(G)[V(G)\setminus (N_{V(G')}\cup V(G'))]$.
Let's denote the second connected component $G_\mathrm{rest}$.
We can then see that if after performing the single-qubit Clifford operations corresponding to $m$, we only measure the qubits in $N_{V(G')}$ in the standard basis, followed by corrections, we arrive at the following state
\begin{equation}
    \ket{G'}\otimes\ket{G_\mathrm{rest}}\otimes\bigotimes_{v\in N_{V(G')}}\ket{+}_v.
\end{equation}
The entanglement in $\ket{G_\mathrm{rest}}$ is then not wasted.
Since there are in fact multiple sequences of vertices $m$ such that $\tau_m(G)[V(G')]=G'$, one can also try to minimize the neighborhood $N_{V(G')}$ and therefore maximize the size of $\ket{G_\mathrm{rest}}$.

\section{Efficient algorithm based on theorem by Courcelle}\label{sec:courcelle}
As mentioned, we show in another paper~\cite{npcomplete} that the problem of deciding if $G'$ is a vertex-minor of $G$ is $\mathbb{NP}$-Complete in general.
Fortunately the problem is fixed-parameter tractable in the rank-width of $G$ and in general the size of $G'$, which follows from results by Oum and Courcelle in ~\cite{Courcelle2007} as we show below.
The statement that a problem is fixed parameter-tractable in some parameter $r$ means that there exists an algorithm that solves the problem and has running time
\begin{equation}\label{eq:FPT}
    \mathcal{O}(f(r)\cdot p(n))
\end{equation}
where $p$ is some polynomial and $n$ is the size of the input to the problem.
Many $\mathbb{NP}$-Complete problems are fixed-parameter tractable, which means that their time complexity is not necessarily super-polynomial in the input size but rather in the parameter $r$.
For $\mathbb{NP}$-Complete fixed-parameter tractable problems, the factor $f(r)$ must scale super-polynomially with $n$ in the worst case, unless $\mathbb{P}=\mathbb{NP}$.

In 1990, Courcelle proved that a large class of graph problems are fixed-parameter tractable in the tree-width of the graph~\cite{Courcelle1990}.
Courcelle's theorem states that any graph problem specified by a monadic second-order logic (MS) formula can be solved in linear time on graphs of bounded tree-width.
The tree-width is a notion which essentially describes how tree-like a graph is~\cite{Robertson1986}.
Many problems that are hard in general become tractable on trees, as for example the subgraph isomorphism problem\cite{Gupta1996}.
The same holds for graphs which are not too different from trees, i.e. have a low tree-width, which is exactly what Courcelle's theorem states.
Since the original theorem by Courcelle, there has also been many generalizations including the same statement but using rank-width.
Bounded rank-width captures a larger class of graph than tree-width, for example complete graphs have minimal rank-width.
Importantly here is that rank-width is invariant under local complementations and non-increasing under vertex-deletions~\cite{Oum2005}.
We give more details on rank-width in appendix~\ref{sec:background}\ref{app:rwd}.

MS logic is an extension of first-order logic, which allows for quantification over sets~\cite{Courcelle2011}.
Courcelle's theorem actually holds for a strictly more expressive logic called counting monadic second-order logic (CMS) where one can also express whether the size of a set is zero modulo $p$~\cite{Courcelle2011}.
A sublanguage of CMS is C$_2$MS where $p$ is restricted to be $2$ and one can therefore express whether the size of a set is even or odd.

Any graph problem specified by a C$_2$MS formula can be solved in cubic time on graphs of bounded rank-width, which is due to theorem 6.55 in~\cite{Courcelle2011}.
We state this formally in theorem~\ref{thm:C2MS}.
This is the result we make use of in this section to find an efficient algorithm for the vertex-minor problem.

It turns out that the vertex-minor problem is expressible in C$_2$MS, which we formally state in theorem~\ref{thm:VM_expression}.
By theorem~\ref{thm:C2MS} and theorem~\ref{thm:VM_expression} we see that the problem of deciding whether $\ket{G'}$ is a qubit-minor $\ket{G}$ is fixed-parameter tractable in the rank-width of $G$ and in the size of $G'$ in general, as we captured in theorem~\ref{thm:QM_FPT}.
The reason the qubit-minor problem is fixed-parameter tractable in both $\rwd(G)$ and $\abs{G'}$ is because the formula $\textproc{VM}_{G'}$ in equation~\eqref{eq:VM_expression} depends on $G'$.
Note that if conjecture~\ref{conj:qrFPT} is true this dependence of $\abs{G'}$ in the running time can be removed.
If $G'$ is restricted to be a certain type of graph or if we ask the question whether $G$ has a vertex-minor on the subset $U\subseteq V(G)$ with a given property instead, then the running time does not need to depend $G'$ or $U$ respectively, see theorems~\ref{thm:GHZ_FPT} and~\ref{thm:prop_VM}, even if conjecture~\ref{conj:qrFPT} is false.

Algorithm~\ref{alg:courcelle} below gives a high-level description of how to solve the vertex-minor problem efficiently on graphs of bounded rank-width.
The overall runtime of the algorithm is $\mathcal{O}(f(\abs{G'},r)\cdot\abs{G}^3)$ and is dominated by line 6.
In line 5, the C$_2$MS formula $\textproc{VM}'$ defined in equation~\eqref{eq:vmprime} can be constructed in time $\mathcal{O}(\abs{G'})$.
An assignment $\alpha_y$ as in line 6 can be found in time $\mathcal{O}(f(\abs{G'},r)\cdot\abs{G}^3)$ due to Courcelle's theorem for CMS selection problems~\cite{Courcelle2011} together with a proof similar to theorem~\ref{thm:C2MS}.
Finally a sequence of switchings $m$ taking the Eulerian vector $(\emptyset,V(G),\emptyset)$ to the $\alpha_y(X_e,Y_e,Z_e)$ can be done in time $\mathcal{O}(\abs{G})$ as we show in appendix~\ref{app:LCfromE}.

\begin{algorithm}[H]
    \caption{Algorithm that decides if $G'<G$. Runtime: $\mathcal{O}(f(\abs{G'},r)\cdot\abs{G}^3)$ \newline Input: ($G$,$G'$) where $\rwd(G)=r$. \newline Output: A sequence $m$ such that $\tau_m(G)[V(G')]=G'$ if $G'<G$.\newline \hbox{}\hspace{1.2cm}$\perp$ \hspace{5.85cm}if $G'\nless G$.}\label{alg:courcelle}
    \begin{algorithmic}[1]
        \Function{is\_vm}{$G$,$G'$}
            \If{$V(G')\nsubseteq V(G)$}
                \State \textbf{return} $\perp$
            \EndIf
            \State Construct the C$_2$MS formula $\textproc{VM}'_{G'}(\mathcal{X},X_e,Y_e,Z_e)$\Comment{See eq.~\ref{eq:vmprime}.}
            \State Find an assignment $\alpha_y$ such that $G\models\textproc{VM}'_{G'}(\mathcal{X}\mapsto V(G'),\alpha_y(X_e,Y_e,Z_e))$
            \If{There is no such $\alpha_y$}
                \State \textbf{return} $\perp$
            \Else
            \State Find a sequence of switchings $m$ taking $(\emptyset,V(G),\emptyset)$ to $\alpha_y(X_e,Y_e,Z_e)$\Comment{See sec.~\ref{sec:courcelle}\ref{subsec:sequence}}
                \State \textbf{return} $m$
            \EndIf
        \EndFunction
    \end{algorithmic}
\end{algorithm}

\subsection{Monadic second-order logic}\label{subsec:ms}
Monadic second-order logic is an sublanguage of second-order logic which in turn is an extension of first-order logic.
In MS one can quantify over sets\footnote{In second-order logic one can more generally quantify over predicates. MS is restricted to quantification over predicates with one argument (monadic), which is equivalent to quantification over sets.} compared to first-order logic which is restricted to only quantification over elements.
For a detailed reference on MS and its extensions, see the book by Courcelle and Engelfriet~\cite{Courcelle2011}.

MS formulas on graphs uses variables which are either vertex variables $x,y,\dots$ or set variables $X,Y,\dots$, which are sets of vertices. 
A MS formula is a finite string built up by the atomic formulas $x=y$, $x\in X$ and $\textproc{adj}(x,y)$\footnote{Expressing whether $(x,y)$ is an edge in the considered graph.} together with the following recursive rules~\cite{Langer2014}:
\begin{enumerate}
    \item If $\phi$ is a formula, then so is $\neg\phi$.
    \item If $\phi_1\dots\phi_l$ are formulas, then so are $(\phi_1\lor\dots\lor\phi_l)$ and $(\phi_1\land\dots\land\phi_l)$.
    \item If $\phi$ is a formula, then so are $\exists x:\phi$, $\forall x:\phi$, $\exists X:\phi$ and $\forall X:\phi$.
\end{enumerate}
The variables of a formula which are not part of a quantifier, as in rule (iii) above, are called free variables.
A formula with no free variables is called a sentence.
We write $\phi(X_1,\dots, X_l, x_1,\dots, x_m)$ for a formula with free variables $X_1\dots X_lx_1\dots x_m$.
To simplify formulas we will sometimes make use of the following abbreviations
\begin{align}
    (\phi\Rightarrow\psi)&\equiv(\neg\phi\lor\psi)\\
    (\phi\Leftrightarrow\psi)&\equiv((\phi\Rightarrow\psi)\land(\psi\Rightarrow\phi))\\
    (X\subseteq Y)&\equiv(\forall x:(x\in X\Rightarrow x\in Y)).
\end{align}

A MS formula is related to a graph $G$ by the atomic formula $\adj{x,y}$, which is true if and only if $(x,y)$ is an edge in $G$.
If a MS sentence $\phi$ is true on a graph $G$, we say that $G$ models $\phi$ and write this as $G\models\phi$.
For a formula with free variables, an assignment $\alpha$ is a mapping from the free variables to vertices and subsets of vertices of a graph $G$.
If a MS formula $\phi(X_1,\dots,X_l,x_1,\dots,x_m)$ is true on a graph $G$ with the assignment $\alpha$, we say that $\alpha$ satisfies $\phi$ on $G$ and write this as
\begin{equation}
    G\models\phi(\alpha(X_1),\dots,\alpha(X_l),\alpha(x_1),\dots,\alpha(x_m)).
\end{equation}
If $\alpha$ assigns $v$ to the free variable $x$, we write this as $x\mapsto v$.
Furthermore, if a formula has free variables $\mathcal{X}=X_1\dots X_lx_1\dots x_m$ then we sometimes write $\phi(\mathcal{X})$ and $\phi(\alpha(\mathcal{X}))$ for an assignment of these variables.
When considering a formula on a graph we implicitly assume that the quantifiers are over the vertex-set of the graph, i.e.
\begin{equation}
    G\models\forall x:\phi(x)\quad\text{iff}\quad \bigwedge_{v\in V(G)}G\models\phi(x\mapsto v).
\end{equation}
As an example the complete graph satisfies the following formula
\begin{equation}\label{eq:MSK}
    K_n\models\forall x,y:\big(\neg(x=y)\Rightarrow\adj{x,y}\big).
\end{equation}
There are many $\mathbb{NP}$-Complete problems that can be defined in MS, including for example 3-colorability~\cite{Ganian2010}.
In extensions of MS, one can also consider optimization problems such as minimum vertex cover and traveling sales person~\cite{Langer2014}.

\subsection{MS problems and complexity}\label{subsec:MSproblems}
Given a MS formula, there are multiple problems one can consider.
We will here be interested in model-checking, property-checking and selection problems, but will also include listing, counting and optimizing for completeness.
Below we define these different problems and further details can be found in~\cite{Courcelle2011}.
\begin{itemize}
    \item Model-checking: Given a sentence $\phi$ and a graph, decide if $G\models\phi$.
    \item Property-checking: Given a formula $\phi(\mathcal{X})$, a graph $G$ and an assignment $\alpha$, decide if $G\models\phi(\alpha(\mathcal{X}))$.
    \item Selection: Given a formula $\phi(\mathcal{X})$ and a graph $G$, find an assignment $\alpha$ such that $G\models\phi(\alpha(\mathcal{X}))$ or if there is no such assignment output $\perp$.
    \item Listing: Given a formula $\phi(\mathcal{X})$ and a graph $G$, find the set of assignments $\{\alpha\,:\,G\models\phi(\alpha(\mathcal{X}))\}$.
    \item Counting: Given a formula $\phi(\mathcal{X})$ and a graph $G$, find the size of the set of assignments $\{\alpha\,:\,G\models\phi(\alpha(\mathcal{X}))\}$.
    \item Optimizing: Given a formula $\phi(X)$ and a graph $G$, find the maximum cardinality of a set assigned to $X$, i.e. $\max(\{\abs{\alpha(X)}\,:\,G\models\phi(\alpha(X))\})$.
\end{itemize}
In the above definitions we have only considered the problem of finding assignments to all the free variables of a formula for simplicity.
One can also consider similar problems as above, where one is given a formula $\phi(\mathcal{X},\mathcal{Y})$, a graph $G$, an assignment $\alpha_y$ to the free variables $\mathcal{Y}$ and where the task is to find an assignment to the rest of the free variables, i.e. a $\alpha_x$ such that $G\models\phi(\alpha_x(\mathcal{X}),\alpha_y(\mathcal{Y}))$.
It turns out that all the above problems are fixed-parameter tractable\footnote{Note that the output of the listing problem is possibly super-polynomial in the size of the graph. Thus for the listing problem, fixed-parameter tractable means polynomial scaling in the size of the input plus the size of the input.} in the formula and the clique-width of the graph, as shown in~\cite{Courcelle2011}.
The same is therefore true for rank-width, since rank-width is bounded if and only if clique-width is bounded~\cite{Oum2006}.
\begin{theorem}\label{thm:C2MS}
    There exists an algorithm which checks whether an assignment $\alpha$ satisfies a C$_2$MS formula $\phi(\mathcal{X})$ on $G$, i.e. whether $G\models\phi(\alpha(\mathcal{X}))$ or not, and has a running time
    \begin{equation}\label{eq:scaling}
        \mathcal{O}(f(\abs{\phi},\rwd(G))\cdot \abs{G}^3).
    \end{equation}
    where $\rwd(G)$ is the rank-width of $G$ and $f$ is an computable function.
\end{theorem}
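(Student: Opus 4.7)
The plan is to reduce property-checking for $\phi(\mathcal{X})$ under an assignment $\alpha$ to a pure model-checking problem for a sentence on a decorated structure, and then invoke the C$_2$MS model-checking version of Courcelle's theorem (theorem 6.55 of~\cite{Courcelle2011}) directly. The crucial design choice is to encode $\alpha$ into the graph as vertex labels rather than into the formula, so that the formula grows only linearly in $|\phi|$ and any dependence on $|\alpha|$, hence on $|G|$, disappears from the eventual bound $f(|\phi|, \rwd(G))$.

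Concretely, I would form a labeled graph $(G,\alpha)$ by adjoining one fresh unary predicate $P_{X_i}$ for each free set variable $X_i$ in $\mathcal{X}$ and one fresh unary predicate $P_{x_j}$ for each free vertex variable $x_j$, declaring $P_{X_i}(v)$ iff $v \in \alpha(X_i)$ and $P_{x_j}(v)$ iff $v = \alpha(x_j)$. From $\phi$ I would build a sentence $\phi'$ by rewriting each atom $y \in X_i$ as $P_{X_i}(y)$ and by re-binding each occurrence of a free vertex variable $x_j$ with an existential quantifier whose scope enforces $P_{x_j}$. A routine structural induction on $\phi$ then shows $G \models \phi(\alpha(\mathcal{X}))$ iff $(G,\alpha) \models \phi'$, and clearly $|\phi'| = \mathcal{O}(|\phi|)$. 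Since the new predicates are vertex labels and add no edges, the matrix $\Gamma[A, V \setminus A]$ is unchanged for every bipartition $(A, V \setminus A)$, so $\rwd((G,\alpha)) = \rwd(G)$ and any optimal rank-decomposition of $G$ is also optimal for the labeled structure.

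Finally I would apply the C$_2$MS model-checking theorem on graphs of bounded rank-width to $(G,\alpha)$ and $\phi'$. This gives a running time of the form $\mathcal{O}(g(|\phi'|, \rwd((G,\alpha))) \cdot |G|^3) = \mathcal{O}(f(|\phi|, \rwd(G)) \cdot |G|^3)$ for a suitable computable $f$, with the cubic factor absorbing the $\mathcal{O}(|G|^3)$ cost of computing a bounded-width rank-decomposition via the Oum-Seymour-style algorithm cited in~\cite{Courcelle2011}. The only technical point requiring care, and the main obstacle in the write-up, is to justify that Courcelle's theorem, usually stated for plain graphs, extends unchanged to graphs decorated with a fixed finite set of unary predicates with the overhead from the labels absorbed into $f$; this is standard in the automaton-theoretic proof of theorem 6.55 and is explicitly addressed in~\cite{Courcelle2011}, so it contributes no new asymptotic cost beyond what is already packaged into $f$.
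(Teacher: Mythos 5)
Your proposal is correct and follows essentially the same route as the paper: both reduce property-checking to model-checking (you unpack the standard unary-predicate encoding of $\alpha$ that the paper simply cites from section 6.4.1 of~\cite{Courcelle2011}) and then invoke theorem 6.55 of~\cite{Courcelle2011}. The only step the paper makes explicit that you elide is the passage from clique-width, in which theorem 6.55 is stated, to rank-width via Oum's inequality $\rwd(G)\leq\cwd(G)\leq 2^{\rwd(G)+1}-1$, which is routine.
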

\begin{proof}
    Theorem 6.55 in~\cite{Courcelle2011} states that CMS model-checking problem can be solved in time
    \begin{equation}\label{eq:time_courcelle}
        \mathcal{O}(f(\phi,\cwd(G))\cdot \abs{G}^3),
    \end{equation}
    where $\cwd(G)$ is the clique-width of $G$.
    In section 6.4.1 of~\cite{Courcelle2011} it is also shown that the CMS property-checking problem can be reduced to the CMS model-checking problem and can therefore equivalently be solved in time as in equation~\eqref{eq:time_courcelle}.
    Furthermore, in definition 6.1 of~\cite{Courcelle2011} they show that the CMS model-checking problem can be solved in time
    \begin{equation}
        \mathcal{O}(f(\abs{\phi},\cwd(G))\cdot \abs{G}^3),
    \end{equation}
    since there are only finitely many sentences of size bounded by a given integer.
    The theorem then follows since the rank-width is bounded if and only if the clique-with is bounded, which is because
    \begin{equation}
        \rwd(G)\leq\cwd(G)\leq2^{\rwd(G)+1}-1,
    \end{equation}
    as shown by Oum in~\cite{Oum2006}.
\end{proof}

In many cases it is in fact the quantifier rank, see definition~\ref{def:qr} below, which dominates the runtime to solve MS problems.
For example in~\cite{Langer2011} it is shown by Langer, Rossmanith and Sikdar that the MS model-checking problem, i.e. if $G\models\phi$, can be solved in time $\mathcal{O}(f(\qr(\phi),\cwd(G))\cdot\abs{G}^3))$.
As discussed with Langer, it is probably possible to extend this statement to also the CMS model-checking problem and therefore CMS property-checking problem.
We therefore make the following conjecture,
\begin{conjecture}\label{conj:qrFPT}
    There exists an algorithm which checks whether an assignment $\alpha$ satisfies a C$_2$MS formula $\phi(\mathcal{X})$ on $G$, i.e. whether $G\models\phi(\alpha(\mathcal{X}))$ or not, and has a running time
    \begin{equation}\label{eq:scaling_qr}
        \mathcal{O}(f(\mathrm{qr}(\phi),\rwd(G))\cdot \abs{G}^3).
    \end{equation}
    where $\mathrm{qr}(\phi)$ is the quantifier rank of $\phi$, $\rwd(G)$ is the rank-width of $G$ and $f$ is a computable function.
\end{conjecture}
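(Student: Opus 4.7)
The plan is to mirror the strategy of Langer, Rossmanith, and Sikdar~\cite{Langer2011} for MS model-checking but to replace MS by C$_2$MS and then reduce property-checking to model-checking without blowing up the relevant parameter. Concretely, I would (i) translate the input graph $G$ into a clique-width expression of width at most $2^{\rwd(G)+1}-1$ using Oum's constructive bound together with the cubic-time algorithm producing an approximate rank-decomposition; (ii) convert this expression into a node-labelled binary tree $T_G$; and (iii) run a finite deterministic bottom-up tree automaton $\mathcal{A}_\phi$ on $T_G$ whose state set depends only on $\qr(\phi)$ and $\cwd(G)$. Cubic time in $\abs{G}$ then arises from the cost of constructing the decomposition, while the hidden constant absorbs the automaton.

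The heart of the argument is to show that the number of states of $\mathcal{A}_\phi$ can be bounded by a computable function of $q=\qr(\phi)$ and $k=\cwd(G)$ alone, independent of $\abs{\phi}$. For MS this is the standard Feferman--Vaught / Ehrenfeucht--Fra\"iss\'e fact that there are, up to logical equivalence, only finitely many MS formulas of quantifier rank $\leq q$ over a finite relational signature, and that the $q$-type of a $k$-labelled graph constructed by a clique-width operation is determined by the $q$-types of its sub-structures. The plan is to prove a C$_2$MS analogue: I would extend the composition lemma by augmenting the type of a substructure with, for each subset of free set-variables indexed by the formula, the parity of the number of vertices it contains. Because counting is taken modulo $2$ only, the extra information attached to each type lies in a finite set whose size still depends only on $q$ and $k$, so the overall type space remains finite and uniformly computable from $(q,k)$.

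To pass from model-checking to property-checking, I would encode the assignment $\alpha$ as additional unary predicates on $G$: each free set-variable $X_i$ in $\mathcal{X}$ becomes a unary relation $\hat X_i$ interpreted as $\alpha(X_i)$, and each free vertex-variable $x_j$ becomes a relation $\hat x_j$ interpreted as the singleton $\{\alpha(x_j)\}$. The expanded structure $(G,\alpha)$ has the same clique-width as $G$ up to an additive constant depending only on $\abs{\mathcal{X}}\leq\qr(\phi)+c$, and we obtain a closed C$_2$MS sentence $\hat\phi$ of the same quantifier rank as $\phi$ by replacing each occurrence of $X_i$ with $\hat X_i$ and each $x_j$ with the unique element of $\hat x_j$. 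The property-checking problem then reduces directly to the C$_2$MS model-checking problem on $(G,\alpha)$, and the promised bound \eqref{eq:scaling_qr} follows from the previous step.

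The main obstacle I anticipate is establishing the C$_2$MS composition lemma rigorously: one must verify that modular counting quantifiers compose well under the clique-width operations (disjoint union, relabelling, and connect-by-label), namely that the parity of $\abs{\{v:\psi(v)\}}$ in a composite structure is determined by the parities (and base $q$-types) on the components. For disjoint union this is trivial, but for the connect-operation one must argue that the set defined by $\psi$ after connecting behaves additively modulo $2$ with respect to the components, which requires a careful induction on formula structure. Once this combinatorial lemma is in place, the rest of the construction is a routine automaton-product argument along the lines of~\cite{Courcelle2011,Langer2011}.
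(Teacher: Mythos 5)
The statement you are proving is labelled a \emph{conjecture} in the paper: the authors give no proof, only the observation that Langer, Rossmanith and Sikdar established the analogous bound for MS model-checking and that Langer believes the extension to CMS is likely. So there is no proof in the paper to compare against; your proposal must stand on its own. Its first half --- extending the Langer et al.\ type/automaton argument from MS to C$_2$MS model-checking of \emph{sentences} --- is the route the authors themselves envisage and looks sound: with the modulus fixed at $2$ the set of rank-$q$ types over a fixed finite signature remains finite, and the composition behaviour under disjoint union (additivity of parities) and under the relabel/connect operations (which are quantifier-free transductions, so handled by backwards translation rather than genuine composition) is standard.

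The gap is in your reduction from property-checking to model-checking. You encode each free variable of $\phi(\mathcal{X})$ as a fresh unary predicate and assert that $\abs{\mathcal{X}}\leq\qr(\phi)+c$, so that the signature expansion is harmless. That inequality is false: the number of free variables of a formula is not controlled by its quantifier rank. The paper's own central example makes this vivid --- $\textproc{VM}_{G'}$ has $\abs{G'}$ free vertex variables, length $\mathcal{O}(\abs{G'}^2)$, and quantifier rank exactly $10$. Under your reduction the expanded structure carries $\abs{G'}$ unary predicates, and the number of rank-$q$ C$_2$MS types over a signature with $m$ unary predicates grows with $m$ (already at rank $0$ one gets a tower in $m$ from Boolean combinations of the atomic facts about the distinguished elements). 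Hence the state set of your automaton, and therefore $f$, would depend on $\abs{\mathcal{X}}$ --- which is precisely the dependence on $\abs{G'}$ that the conjecture is formulated to eliminate (it is the whole reason theorem~\ref{thm:QM_FPT} currently carries an $f(\abs{G'},r)$ factor). A correct proof must explain how to evaluate a formula with an unbounded number of free variables, under a given assignment, without paying for each free variable in the type space; your argument does not address this, and it is the genuinely hard part of the conjecture. (A minor additional slip: adding $m$ unary predicates changes the clique-width multiplicatively, roughly by $2^{m}$, not by an additive constant --- but this is subsumed by the main objection.)
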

\begin{definition}\label{def:qr}
    The quantifier rank $\qr(\phi)$ of a formula $\phi$ is the maximum number of nested quantifiers as defined in~\cite{Langer2014} and can be found by the following recursive relations.
    \begin{align}
        \qr(\phi)         &=0\text{, if }\phi\text{ is atomic}       &\qr(\exists x:\phi(x))&=\qr(\phi)+1\\
        \qr(\neg\phi)     &=\qr(\phi)                                &\qr(\exists X:\phi(X))&=\qr(\phi)+1\\
        \qr(\phi\lor\psi) &=\max(\{\qr(\phi),\qr(\psi)\})            &\qr(\forall x:\phi(x))&=\qr(\phi)+1\\
        \qr(\phi\land\psi)&=\max(\{\qr(\phi),\qr(\psi)\})            &\qr(\forall X:\phi(X))&=\qr(\phi)+1
    \end{align}
\end{definition}

\subsection{Vertex-minor as C\texorpdfstring{$_2$}{2}MS formula}\label{subsec:vm_formula}
In~\cite{Courcelle2007} Courcelle and Oum show how one can express whether a graph $G'$ is a vertex-minor of $G$ in counting monadic second-order logic C$_2$MS.
We restate this here and also provide the explicit C$_2$MS formula and its quantifier rank.
In appendix~\ref{app:formula} we explicitly provide all subformulas we make use of here, which can also be found in~\cite{Courcelle2007}.
These formulas and the statement of this section build heavily on the concept of an \emph{isotropic system} which was introduced by Bouchet in~\cite{Bouchet1987}.
We will not go into the details of isotropic systems here; more details can be found in~\cite{Bouchet1987}.
The reason for the relation to isotropic systems is that an isotropic system describes an equivalence class of graphs under local complementation.
Importantly, an isotropic system\footnote{There are actually multiple isotropic system $S(G,A,B)$ related to a graph, depending on the choice of supplementary vectors $A$ and $B$, see~\cite{Bouchet1987}. Here we consider a canonical isotropic system $S(G)$ and chose the supplementary vectors to be $A=(\omega,\dots,\omega)$ and $B=(1,\dots,1)$, where $\omega$ is a primitive element of $\mathbb{F}_4$} $S(G)$ given by a graph $G$ has a number of \emph{Eulerian vectors} and each of these Eulerian
vectors describes a LC-equivalent graph to $G$.
Furthermore, any LC-equivalent graph to $G$ is described by some Eulerian vector of $S(G)$.
We will here describe a Eulerian vector by three pairwise disjoint subsets of the vertices of $G$ whose union is $V(G)$ and write this as a tuple $(X_e,Y_e,Z_e)$.
The set of Eulerian vectors of $S(G)$ will be denoted as $\mathcal{E}(S(G))$.
The formula $\textproc{Eul}(X_e,Y_e,Z_e)$ in equation~\eqref{eq:MS_Eul} describes whether $(X_e,Y_e,Z_e)$ is a Eulerian vector of $S(G)$, i.e.
\begin{equation}
    G\models\textproc{Eul}(X_e\mapsto U_e,Y_e\mapsto V_e,Z_e\mapsto W_e)\quad\text{iff}\quad (U_e,V_e,W_e)\in \mathcal{E}(S(G)).
\end{equation}

As mentioned above, the set of graphs described by the Eulerian vectors of $S(G)$ are exactly the LC-equivalent graphs to $G$.
Bouchet used this to develop an efficient algorithm to test LC-equivalence between graphs in~\cite{Bouchet1991}, which has been used to efficiently test single-qubit Clifford equivalence of graph states in~\cite{VandenNest2004a}.
Let's denote the LC-equivalent graph to $G$ described by the Eulerian vector $(X_e,Y_e,Z_e)$ as $\mathcal{G}(X_e,Y_e,Z_e)$.
We will now find a formula that captures whether a graph $G'$ is a vertex-minor of $G$.
From the above and equation~\eqref{eq:LC_VM} we have that $G'<G$ if and only if there exists a Eulerian vector $(X_e,Y_e,Z_e)$ such that 
\begin{equation}
    \mathcal{G}(X_e,Y_e,Z_e)[V(G')]=G'.
\end{equation}
How do we express this as a C$_2$MS formula?
The formula $\textproc{Adj}(u,v,X_e,Y_e,Z_e)$ in equation~\eqref{eq:MS_Adj} describes whether the edge $(u,v)$ is an edge of the graph $\mathcal{G}(X_e,Y_e,Z_e)$, i.e.
\begin{equation}\label{eq:adj_rel}
    G\models\textproc{Adj}(x\mapsto u,y\mapsto v,X_e\mapsto U_e,Y_e\mapsto V_e,Z_e\mapsto W_e)\quad\text{iff}\quad (u,v)\in E(\mathcal{G}(U_e,V_e,W_e)).
\end{equation}
Using equation~\eqref{eq:adj_rel} we can express whether $G'$ is a vertex-minor of $G$, as described in the following theorem.
\begin{theorem}\label{thm:VM_expression}
    For any $G'$ with vertex-set $V(G')=\{x_1,\dots,x_k\}$, there exists a C$_2$MS formula $\textproc{VM}_{G'}(x_1,\dots,x_k)$ such that
    \begin{equation}\label{eq:VM_expression}
        G\models\textproc{VM}_{G'}(\alpha(x_1),\dots,\alpha(x_k))\quad\text{iff}\quad\alpha(G')<G,
    \end{equation}
    where $x_i$ are the free variables of the formula, $\alpha$ is a bijection from $V(G')$ to a subset of $V(G)$ of size $k$.
    In equation~\eqref{eq:VM_expression}, $\alpha$ functions both as an assignment of the free variables of $\textproc{VM}$ and as a relabeling of the vertices in $G'$ by $\alpha(G)$.
    This dual-purpose of $\alpha$ is valid since we identify the free variables of $\textproc{VM}$ with the vertices of $G'$.
    To be precise, by $\alpha(G)$ we mean the graph 
    \begin{equation}
        \alpha(G)=(\{\alpha(x)\;:\;x\in V(G')\},\{(\alpha(x),\alpha(y))\;:\;(x,y)\in E(G')\}).
    \end{equation}
    Furthermore the length and quantifier rank of $\textproc{VM}$ has the following scaling
    \begin{equation}
        \abs{\textproc{VM}_{G'}}=\mathcal{O}(\abs{G'}^2),\quad\qr(\textproc{VM}_{G'})=10=\mathcal{O}(1).
    \end{equation}
\end{theorem}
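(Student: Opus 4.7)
The plan is to construct $\textproc{VM}_{G'}$ explicitly by stitching together the subformulas $\textproc{Eul}$ and $\textproc{Adj}$ of Courcelle and Oum (whose full expressions and semantics are recorded in the appendix). The key observation is that, by the isotropic-system characterization of LC-equivalence, $\alpha(G') < G$ holds if and only if there exists a Eulerian vector $(U_e, V_e, W_e) \in \mathcal{E}(S(G))$ such that the graph $\mathcal{G}(U_e, V_e, W_e)$, restricted to the vertex set $\{\alpha(x_1), \dots, \alpha(x_k)\}$, equals $\alpha(G')$ edge by edge. Since $\textproc{Eul}$ already encodes membership in $\mathcal{E}(S(G))$ and $\textproc{Adj}$ encodes adjacency inside $\mathcal{G}(X_e, Y_e, Z_e)$, we have exactly the building blocks we need.

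Concretely, I would take
\[
    \textproc{VM}_{G'}(x_1, \dots, x_k) \equiv \textproc{Dist} \wedge \exists X_e, Y_e, Z_e: \bigl(\textproc{Eul}(X_e, Y_e, Z_e) \wedge \Phi_{G'}\bigr),
\]
where $\textproc{Dist} \equiv \bigwedge_{i<j} \neg(x_i = x_j)$ enforces that the assignment $\alpha$ is injective on the free variables, and $\Phi_{G'}$ is the conjunction of $\textproc{Adj}(x_i, x_j, X_e, Y_e, Z_e)$ over all pairs $i < j$ with $(x_i, x_j) \in E(G')$ together with $\neg \textproc{Adj}(x_i, x_j, X_e, Y_e, Z_e)$ over all pairs $i < j$ with $(x_i, x_j) \notin E(G')$. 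Correctness then follows immediately from the stated semantics of $\textproc{Eul}$ and $\textproc{Adj}$: the existential selects a Eulerian vector, and $\Phi_{G'}$ forces the induced subgraph on $\{\alpha(x_i)\}_i$ to match $\alpha(G')$ precisely.

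For the length bound, $\Phi_{G'}$ contains exactly $k(k-1)/2$ copies of $\textproc{Adj}$, each of constant size, and $\textproc{Dist}$ contributes $k(k-1)/2$ constant-size inequalities; every other part of the formula has constant size. This gives $\abs{\textproc{VM}_{G'}} = \mathcal{O}(k^2) = \mathcal{O}(\abs{G'}^2)$. For the quantifier rank, note that conjunctions and negations do not increase $\qr$, so the only quantifiers that count are the three outer set-existentials $\exists X_e, Y_e, Z_e$ plus those nested inside $\textproc{Eul}$ and $\textproc{Adj}$. Reading off the explicit formulas exhibited in the appendix gives $\qr(\textproc{Eul}) = \qr(\textproc{Adj}) = 7$, and hence $\qr(\textproc{VM}_{G'}) = 7 + 3 = 10$.

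The only genuinely delicate step is the quantifier-rank bookkeeping inside $\textproc{Eul}$ and $\textproc{Adj}$ --- these formulas require unfolding several layers of isotropic-system definitions, and this is where the constant $7$ actually emerges. Once those constant-size, constant-quantifier-rank building blocks have been verified in the appendix, the assembly of $\textproc{VM}_{G'}$ above and the resulting size and quantifier-rank bounds are routine.
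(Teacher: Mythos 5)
Your proposal is correct and takes essentially the same route as the paper: the paper's proof exhibits exactly the formula $\exists X_e,Y_e,Z_e:(\textproc{Eul}\land\bigwedge_{(x,y)\in E(G')}\textproc{Adj}\land\bigwedge_{(x,y)\notin E(G')}\neg\textproc{Adj})$ and derives $\abs{\textproc{VM}_{G'}}=\mathcal{O}(\abs{G'}^2)$ and $\qr=3+\max(7,7)=10$ in the same way. Your extra $\textproc{Dist}$ conjunct is harmless but redundant, since the theorem already restricts $\alpha$ to be a bijection.
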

\begin{proof}
    We prove this by explicitly providing the C$_2$MS formula as follows
    \begin{align}
        \textproc{VM}_{G'}(x_1,\dots,x_k)=\exists X_e,Y_e,Z_e:\Bigg(\textproc{Eul}(X_e,Y_e,Z_e)&\land\bigwedge_{(x,y)\in E(G')}\textproc{Adj}(x,y,X_e,Y_e,Z_e) \nonumber\\
                                                                                               &\land\bigwedge_{(x,y)\notin E(G')}\neg\textproc{Adj}(x,y,X_e,Y_e,Z_e) \Bigg)
    \end{align}
    It is then clear that equation~\eqref{eq:VM_expression} is true, since if $G\models\textproc{VM}(\alpha(x_1),\dots,\alpha(x_k))$ then we know that there exist an LC-equivalent graph to $G$ which induced subgraph on $V(G')$ has the edge-set $\{(\alpha(x),\alpha(y)):(x,y)\in E(G')\}$.
    This is precisely $\alpha(G')$ and therefore $\alpha(G')<G$.
    Furthermore, if $\alpha(G')<G$ then we know that there exist a Eulerian vector $(U_e,V_e,W_e)$ such that
    \begin{equation}
        \bigwedge_{(x,y)\in E(G')}\textproc{Adj}(\alpha(x),\alpha(y),U_e,V_e,W_e)\land\bigwedge_{(x,y)\notin E(G')}\neg\textproc{Adj}(\alpha(x),\alpha(y),U_e,V_e,W_e)
    \end{equation}
    is true.

    Next we show how the length and quantifier rank of $\textproc{VM}$ scale with $G'$.
    Firstly the length clearly scales as
    \begin{equation}
        \abs{\textproc{VM}}=\mathcal{O}(\abs{E(G')})=\mathcal{O}(\abs{G'}^2).
    \end{equation}
    The quantifier ranks of the subformulas used here are given in appendix~\ref{app:formula} and in particular we have that $\qr(\textproc{Eul})=7$ and $\qr(\textproc{Adj})=7$.
    Thus, we have that the quantifier rank of $\textproc{VM}$ is
    \begin{equation}
        \qr(\textproc{VM})=3+\max(\{\qr(\textproc{Eul}),\qr(\textproc{Adj})\})=3+\max(\{7,7\})=10=\mathcal{O}(1).
    \end{equation}
\end{proof}
It is easy to see that theorem~\ref{thm:QM_FPT} is a direct consequence of theorem~\ref{thm:QMVM}, theorem~\ref{thm:C2MS} and theorem~\ref{thm:VM_expression}.

As mentioned earlier, it is possible to specify in C$_2$MS  whether a graph has a vertex-minor on a subset of the vertices with a given property.
We capture this in the following theorem.
\begin{theorem}\label{thm:prop_VM}
    Given a C$_2$MS sentence $\phi$ specifying some graph property $P$, then there exists a C$_2$MS formula $\textproc{Prop\_VM}_\phi(X)$ capturing whether a graph has a vertex-minor on $X$ which satisfies $P$, i.e.
    \begin{equation}
        G\models\textproc{Prop\_VM}_\phi(X\mapsto U)\quad\text{iff}\quad\exists G':(V(G')=U)\land(G'<G)\land(G'\models\phi)
    \end{equation}
\end{theorem}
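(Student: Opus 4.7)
The plan is to mirror the construction of $\textproc{VM}_{G'}$ in theorem~\ref{thm:VM_expression}, but rather than hard-coding the edge-set of a fixed target $G'$, we embed the property $\phi$ itself into the formula by reinterpreting the adjacency relation through a quantified Eulerian vector. Concretely, take $X$ as the free set variable, existentially quantify a triple $(X_e, Y_e, Z_e)$, and use $\textproc{Eul}$ to guarantee that this triple is an Eulerian vector of $S(G)$. The candidate vertex-minor is then implicitly the induced subgraph $\mathcal{G}(X_e, Y_e, Z_e)[X]$, and what remains is to write a C$_2$MS formula that asserts $\mathcal{G}(X_e, Y_e, Z_e)[X] \models \phi$ using only data available inside $G$.

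To that end, I would define a syntactic relativization $\phi^{X, X_e, Y_e, Z_e}$ of $\phi$ by recursion on its structure: the atom $\adj{x,y}$ is replaced by $\textproc{Adj}(x, y, X_e, Y_e, Z_e)$; the atoms $x = y$ and $x \in Y$ are left unchanged; first-order quantifiers are relativized to $X$ (so $\exists x: \psi$ becomes $\exists x: (x \in X \land \psi^{X, X_e, Y_e, Z_e})$ and $\forall x: \psi$ becomes $\forall x: (x \in X \Rightarrow \psi^{X, X_e, Y_e, Z_e})$); set quantifiers are relativized by subset containment, replacing $\exists Y: \psi$ by $\exists Y: (Y \subseteq X \land \psi^{X, X_e, Y_e, Z_e})$ and similarly for $\forall Y$; boolean connectives commute with the transformation; and the modular-counting atom of C$_2$MS, being a statement solely about the cardinality of an already relativized set variable, is kept unchanged. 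The full formula is then
\begin{equation}
\textproc{Prop\_VM}_\phi(X) = \exists X_e, Y_e, Z_e: \big(\textproc{Eul}(X_e, Y_e, Z_e) \land \phi^{X, X_e, Y_e, Z_e}\big).
\end{equation}

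Correctness would follow by a straightforward structural induction on $\phi$: the invariant is that, whenever all first-order variables are mapped into $X$ and all set variables into subsets of $X$, $G$ satisfies $\psi^{X, X_e, Y_e, Z_e}$ under an assignment if and only if the induced graph $\mathcal{G}(X_e, Y_e, Z_e)[X]$ satisfies $\psi$ under the same assignment; the atomic step is precisely equation~\eqref{eq:adj_rel}, and the inductive cases reduce to the relativization of quantifiers. Combined with the guarantee that $\textproc{Eul}(X_e, Y_e, Z_e)$ holds, this gives the stated equivalence. The main obstacle is not conceptual but is rather a careful bookkeeping check that the relativization commutes with every C$_2$MS construct, in particular with the modular-counting atoms, and that the resulting formula remains within C$_2$MS; no new insight beyond the subformulas $\textproc{Eul}$ and $\textproc{Adj}$ from~\cite{Courcelle2007} is required.
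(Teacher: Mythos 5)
Your construction is exactly the one the paper uses: replace $\adj{x,y}$ by $\textproc{Adj}(x,y,X_e,Y_e,Z_e)$, relativize all quantifiers to $X$, and close with an existential over Eulerian vectors guarded by $\textproc{Eul}$. The only cosmetic difference is that you fold the two rewriting steps ($\phi\to\phi'\to\phi''$) into a single recursion and spell out the induction invariant slightly more explicitly, so this matches the paper's proof.
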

\begin{proof}
    Let $\phi'$ be the formula made from $\phi$ by replacing all instances of the predicate $\adj{x,y}$ by the formula $\textproc{Adj}(x,y,X_e,Y_e,Z_e)$.
    If $(U_e,V_e,W_e)$ is a Eulerian vector of $S(G)$, then we have that
    \begin{equation}
        G\models\phi'(X_e\mapsto U_e,Y_e\mapsto V_e,Z_e\mapsto W_e)\quad\text{iff}\quad \mathcal{G}(U_e,V_e,W_e)\models\phi.
    \end{equation}
    The expression on the right of the above equation states that the LC-equivalent graph $\mathcal{G}(U_e,V_e,W_e)$ models $\phi$, but what we want is that the induced subgraph $\mathcal{G}(U_e,V_e,W_e)[X]$ models $\phi$.
    This can be done by restricting all the quantifiers in $\phi'$ to the set $X$.
    Thus, let $\phi''$ be the formula made from $\phi'$ by making the following changes to all the quantifiers
    \begin{align}
        \forall y:\psi(y)\quad&\rightarrow\quad\forall y:\big(y\in X\Rightarrow\psi(y)\big) \label{eq:replq1}\\
        \forall Y:\psi(Y)\quad&\rightarrow\quad\forall Y:\big(Y\subseteq X\Rightarrow\psi(Y)\big) \label{eq:replq2}\\
        \exists y:\psi(y)\quad&\rightarrow\quad\exists y:\big(y\in X\land\psi(y)\big) \label{eq:replq3}\\
        \exists Y:\psi(Y)\quad&\rightarrow\quad\exists Y:\big(Y\subseteq X\land\psi(Y)\big) \label{eq:replq4}.
    \end{align}
    We then see that if $(U_e,V_e,W_e)$ is an Eulerian vector of $S(G)$, then
    \begin{equation}
        G\models\phi''(X\mapsto U,X_e\mapsto U_e,Y_e\mapsto V_e,Z_e\mapsto W_e)\quad\text{iff}\quad \mathcal{G}(U_e,V_e,W_e)[U]\models\phi.
    \end{equation}
    The formula $\textproc{Prop\_VM}$ can then be built by checking if there exists a Eulerian vector such that $G$ models $\phi''$, i.e.
    \begin{equation}\label{eq:prop_VM}
        \textproc{Prop\_VM}_\phi(X)=\exists X_e,Y_e,Z_e:\Big[\textproc{Eul}(X_e,Y_e,Z_e)\land\phi''(X,X_e,Y_e,Z_e)\Big].
    \end{equation}
\end{proof}
Since it is possible to specify whether a graph is a complete graph in C$_2$MS, see equation~\eqref{eq:MSK}, we see that theorem~\ref{thm:GHZ_FPT} directly follows from theorem~\ref{thm:QMVM}, theorem~\ref{thm:C2MS} and theorem~\ref{thm:prop_VM}.
Let's use the method in the proof of theorem~\ref{thm:prop_VM} to explicitly find the formula expressing whether a graph has the complete graph as a vertex-minor on the subset $X$.
Let $\textproc{Complete}$ be the formula in equation~\eqref{eq:MSK} which is modeled by $G$ if $G$ is a complete graph.
We first find $\textproc{Complete}'$ by replacing the predicate $\adj{x,y}$
\begin{equation}
    \textproc{Complete}'(X_e,Y_e,Z_e)=\forall x,y:\big(\neg(x=y)\Rightarrow\textproc{Adj}(x,y,X_e,Y_e,Z_e)\big).
\end{equation}
Then replacing the quantifiers as in equations~\eqref{eq:replq1}-\eqref{eq:replq4} we get the formula
\begin{align}
    \textproc{Complete}''(X,X_e,Y_e,Z_e)&=\forall x:\Big[x\in X\Rightarrow\Big(\forall y:y\in X\Rightarrow\big(\neg(x=y)\Rightarrow\textproc{Adj}(x,y,X_e,Y_e,Z_e)\big)\Big)\Big] \nonumber\\
                                        &=\forall x,y:\Big[\big((x\in X)\land(y\in X)\land\neg(x=y)\big)\Rightarrow\textproc{Adj}(x,y,X_e,Y_e,Z_e)\Big]
\end{align}
Finally by using the equation~\eqref{eq:prop_VM} we arrive at the formula
\begin{equation}
    \textproc{Complete\_VM}(X)=\exists X_e,Y_e,Z_e:\Big[\textproc{Eul}(X_e,Y_e,Z_e)\land\textproc{Complete}''(X,X_e,Y_e,Z_e)\Big]
\end{equation}
which has the following property
\begin{equation}
    G\models\textproc{Complete\_VM}(X\mapsto U)\quad\text{iff}\quad K_U<G
\end{equation}
where $K_U$ is the complete graph with vertex-set $U$.

\subsection{Finding the sequence of operations}\label{subsec:sequence}
In this section so far, we have looked at the problem of deciding whether $G'$ is a vertex-minor of $G$, but if this is true then how does one find the sequence of operations that takes $G$ to $G'$?
Similarly, if $\ket{G'}$ is a qubit-minor of $\ket{G}$, what sequence of operations takes $\ket{G}$ to $\ket{G'}$?
We will here describe two ways to find the sequence of operations.

\noindent\underline{Method 1:}
The first way is slightly simpler but increases the runtime from the decision problem by a factor of $\abs{G}$.
The idea is to use an algorithm that solves the decision problem of whether $G'$ is a vertex-minor of $G$ to iteratively find the sequence of operations.
Let's therefore assume that we know that $G'<G$.
Furthermore, let $v$ be a vertex in $V(G)\setminus V(G')$.
From theorem~\ref{thm:multi_vertex-minor} we know that $G'$ is a vertex-minor of at least one of the three graphs $G\setminus v$, $\tau_v(G)\setminus v$ or $T_v(G)\setminus v$.
By using an algorithm for the decision problem we can decide which of these three graphs has $G'$ as a vertex-minor.
Let's denote one of these graphs by $G_1$ and the operation that takes $G$ to $G_1$ as $P_1$, i.e. $G_1=P_1(G)$.
That is, $P_1$ is either $({\_\,})\setminus v$, $\tau_v({\_\,})\setminus v$ or $T_v({\_\,})\setminus v$, such that $G'<G_1<G$.
Now perform the same step again to find an operation $P_2$ taking $G_1$ to a graph $G_2$ which has $G'$ as a vertex-minor.
Perform the step $n-k$ times, where $n=\abs{G}$ and $k=\abs{G'}$.
It is then clear that the sequence $P=P_{n-k}\circ\dots\circ P_1$ takes $G$ to a LC-equivalent graph of $G'$.
From $P$ it is easy to find the induced sequence of local complementations $m$ such that $\tau_m(G)[V(G')]\sim_\mathrm{LC}G'$.
Finally we can use the algorithm in~\cite{Bouchet1991} to find a sequence of local complementations $m'$ such that $\tau_{m'}\circ\tau_m(G)[V(G')]=G'$.
Assume that $\abs{G'}$ and $\rwd(G)$ are bounded\footnote{If conjecture~\ref{conj:qrFPT} is true, then $\abs{G'}$ does not need to be bounded.}.
Then according to theorem~\ref{thm:QM_FPT} the decision problem can be solved in time $\mathcal{O}(n^3)$.
To find $P$ we need to run the algorithm for the decision problem $\mathcal{O}(n-k)$ times and compute $P_i(G_{i-1})$ the same number of times.
Computing $P_i(G_{i-1})$ can be done in time $\mathcal{O}(n^2)$, as described in section~\ref{sec:background}\ref{sec:brute}.
Lastly finding the sequence $m'$ can be done in time $\mathcal{O}(k^4)$.
Thus, the total runtime is
\begin{equation}
    \mathcal{O}((n^3+n^2)(n-k))+\mathcal{O}(k^4)=\mathcal{O}(n^4)
\end{equation}

\noindent\underline{Method 2:}
Another way to find the sequence of operations that takes $G$ to $G'$, given that $G'<G$, is to formulate the problem as a C$_2$MS selection problem as described in section~\ref{sec:courcelle}\ref{subsec:MSproblems}.
Recall that an algorithm that solves the C$_2$MS selection problem takes as input a formula $\phi$ with free variables $\mathcal{X}$, $\mathcal{Y}$, an assignment $\alpha_x$ to the free variables $\mathcal{X}$ and a graph $G$ and returns an assignment $\alpha_y$ such that $G\models\phi(\alpha_x(\mathcal{X}),\alpha_y(\mathcal{Y}))$ or returns $\perp$ if no such assignment exists.
We will now formulate a selection problem by making $X_e$, $Y_e$ and $Z_e$ free variables in equation~\eqref{eq:VM_expression} instead of quantifier variables.
Therefore, let $H$ be an isomorphic graph to $G'$ with vertex-set $V(H)=\{x_1,\dots,x_k\}$ and define the C$_2$MS formula
\begin{equation}\label{eq:vmprime}
    \textproc{VM}'_H(x_1,\dots,x_k,X_e,Y_e,Z_e)=\textproc{Eul}(X_e,Y_e,Z_e)\land\bigwedge_{(x,y)\in E(H)}\textproc{Adj}(x,y,X_e,Y_e,Z_e)
\end{equation}
Let $\alpha_x$ be a bijection from $V(H)$ to $V(G')$ such that $\alpha_x(H)=G'$.
We then have the following property of the formula $\textproc{VM}'$
\begin{equation}
    \exists \alpha_e:G\models\textproc{VM}'_H(\alpha_x(x_1,\dots,x_k),\alpha_e(X_e,Y_e,Z_e))\quad\text{iff}\quad G'<G.
\end{equation}
From theorem 6.55 and similar arguments as in the proof of theorem~\ref{thm:C2MS} we know that we can solve the selection problem in time $\mathcal{O}(n^3)$ if $k$ and $\rwd(G)$ are bounded, where $n=\abs{G}$ and $k=\abs{G'}$.
Thus, given an assignment to the free variables $(X_e,Y_e,Z_e)$, i.e. given an Eulerian vector $(U_e,V_e,W_e)$ such that $\mathcal{G}(U_e,V_e,W_e)[V(G')]=G'$, the question is then how to find a sequence of local complementations that takes $G$ to $\mathcal{G}(U_e,V_e,W_e)$.
In appendix~\ref{app:LCfromE} we show how to do this in time $\mathcal{O}(n)$, which shows that the total runtime is
\begin{equation}
    \mathcal{O}(n^3)+\mathcal{O}(n)=\mathcal{O}(n^3).
\end{equation}

\section{Discussion}\label{sec:conclusion}
The problem of deciding whether a stabilizer state $\ket{S_t}$ can be obtained from another $\ket{S_s}$ by single-qubit Clifford operations, single-qubit Pauli measurement and classical communication is equivalent to deciding if some graph $G'$ is a vertex-minor of another graph $G$.
We showed here that the vertex-minor problem can be solved in cubic time in the size of $G$ on instances where $G$ has bounded rank-width and $G'$ has bounded size, by using the theory of monadic second-order logic and a version of Courcelle's theorem.
Furthermore, if conjecture~\ref{conj:qrFPT} is true then the vertex-minor problem can be solved in cubic time in the size of $G$ on the strictly larger class of instances where $G$ has bounded rank-width and $G'$ is arbitrary.
A direct implementation of Courcelle's theorem is however not practical, due to a huge constant factor in the runtime of the algorithm.
Finding more tailored algorithms for the vertex-minor problem on graphs of bounded rank-width is therefore of value.
In~\cite{npcomplete} we provide an efficient algorithm for graphs of rank-width one, which does not have a huge constant factor in the runtime.

Given some graph property $P$ expressible in C$_2$MS one can decide if a graph $G$ has a vertex-minor on a subset $U\subseteq V(G)$ that satisfies $P$, in cubic time in the size of $G$ for graphs with bounded rank-width, as we show in section~\ref{sec:courcelle}\ref{subsec:vm_formula}.
The graph property $P$ can be for example that the graph is a complete graph or that it is $k$-colourable.
Testing for for example 2-colourable qubit-minors could be interesting in the context of purification since it has been shown that the purification schemes in~\cite{Aschauer2005} purify all 2-colourable graph states.

In section~\ref{sec:courcelle}\ref{subsec:sequence} we also showed how to find the sequence of operations that take $\ket{G}$ to its qubit-minor $\ket{G'}$.
Finally in section~\ref{sec:constant} we showed how these operations can be applied in constant time in the size of the graph states and how this can be done without destroying all the rest of the entanglement in the source state.
An open question is how to find an optimal sequence of operations that destroys a minimum amount of entanglement in the rest of the state.

\enlargethispage{20pt}


\dataccess{The implementation of concepts and algorithms, written in SAGE and MONA, can be freely accessed from the git-repository at~\cite{git}.}

\aucontribute{Axel Dahlberg developed the theory and proofs, drafted most of the manuscript and wrote the code at~\cite{git}. 
Stephanie Wehner defined and supervised the project. Both authors read and approved the manuscript.}


\funding{Axel Dahlberg and Stephanie Wehner were supported by STW Netherlands, and NWO VIDI grant, and an ERC Starting grant.}

\ack{We thank Alexander Langer, Robert Ganian, Kenneth Goodenough, Jonas Helsen, Tim Coopmans, J\'{e}ri\'{e}my Ribiero and Ben Criger for interesting discussions and feedback.}


\appendix
\section{Corrections from sequence of Pauli \texorpdfstring{$Z$}{Z} measurement}\label{app:correct}
By performing a measurement in the standard basis of a qubit $v$ which is part of a graph state $\ket{G}$, one can effectively disconnect qubit $v$ from the rest of the state and produce the state $\ket{0}_v\otimes\ket{G\setminus v}$.
Depending on the measurement outcome, certain single-qubit Clifford operations need to be performed to map the post-measurement state to $\ket{0}_v\otimes\ket{G\setminus v}$, as described in section~\ref{sec:background}\ref{subsec:meas}.
One can therefore effectively cut out a graph state on a subset of the qubits $V'$, i.e. producing the state $\ket{G[V']}$, by measuring the qubits in $V(G)\setminus V'$ in the standard basis and performing certain single-qubit Clifford operations.
Here we show what corrections need to be applied to the qubits $V'$, such that the post-measurement state is mapped to the state $\ket{G[V']}$.

Let's assume that $\ket{G}$ is a graph state and we wish to transform this to $\ket{G[V']}$, by measuring the qubits $U=V(G)\setminus V'$ in the standard basis.
Let's denote the qubits in $U$ as $\{v_1,v_2,\dots,v_{n-k}\}$ and the measurement outcome of qubit $v_i$ by\footnote{We identify $0$ and $1$ with the measurement outcomes $+1$ and $-1$, respectively.} $x_i\in\{0,1\}$.
Furthermore, let's denote the projectors in the $Z$ basis as $P_v^{(0)}=P_v^{(Z,+)}$ and $P_v^{(1)}=P_v^{(Z,-)}$.
The post-measurement state is then given by
\begin{equation}
    \ket{\psi^{n-k}_\mathrm{post}}=2^{n-k}P_{v_{n-k}}^{(x_{n-k})}\cdot\dots\cdot P_{v_2}^{(x_2)}P_{v_1}^{(x_1)}\ket{G}.
\end{equation}
By acting with the projectors on $\ket{G}$ we find by induction on $n-k$ that the post-measurement state can be evaluated to
\begin{equation}\label{eq:poststate}
    |\psi^{n-k}_\mathrm{post}\rangle=\left(\bigotimes_{i=1}^{n-k}Z^{\sum_{j=1}^{i-1}x_j\mathrm{adj}(v_i,v_j)}|x_i\rangle_{v_i}\right)\otimes\left(\left(\prod_{i=1}^{n-k}(Z[N_{v_i}\cap V'])^{x_i}\right)|G[V']\rangle\right),
\end{equation}
where $\mathrm{adj}(u,v)$ is $1$ if $(u,v)$ is an edge in $G$ and zero otherwise and where $Z[X]$ is $\prod_{x\in X}Z_x$.
One can see this by checking that indeed
\begin{equation}
    2P_{v_{n-k}}^{(x_{n-k})}\ket{\psi^{n-k-1}_\mathrm{post}}=\ket{\psi^{n-k}_\mathrm{post}}.
\end{equation}
by using equations~\eqref{eq:meas_Z} and~\eqref{eq:corr}.
The operations on the qubits in $V(G)\setminus V'$ in the left part of equation~\ref{eq:poststate} will only give a global phase as follows
\begin{align}
    \left(\bigotimes_{i=1}^{n-k}Z^{\sum_{j=1}^{i-1}x_j\mathrm{adj}(v_i,v_j)}|x_i\rangle_{v_i}\right)&=\left(\bigotimes_{i=1}^{n-k}(-1)^{x_i\sum_{j=1}^{i-1}x_j\mathrm{adj}(v_i,v_j)}|x_i\rangle_{v_i}\right) \nonumber\\
                                                                                            &=(-1)^{\sum_{i=1}^{n-k}\sum_{j=1}^{i-1}x_ix_j\mathrm{adj}(v_i,v_j)}\bigotimes_{i=1}^{n-k}|x_i\rangle_{v_i}.
\end{align}
The exponent in the global phase $\sum_{i=1}^{n-k}\sum_{j=1}^{i-1}x_ix_j\mathrm{adj}(v_i,v_j)$ is in fact the number of edges in the induced graph $G_x=G[\{v_i:x_i=1\}]$.
Let's now consider the correction operators in the right part of equation~\ref{eq:poststate}.
A qubit $v\in V'$ will have a $Z$ contribution from the $i$th factor if $v\in N_{v_i}$ and $x_i=1$.
Thus, we see that the total contribution on qubit $v$ is given by
\begin{equation}
    Z^{y_v}\quad\text{where}\quad y_v=\sum_{i\in\{i\,:\,v_i\in N_v\setminus V'\}}x_i
\end{equation}
Finally we find that the post-measurement state from equation~\ref{eq:poststate} is given by
\begin{equation}
    |\psi^{n-k}_\mathrm{post}\rangle=(-1)^{\abs{E(G_x)}}\left(\bigotimes_{i=1}^{n-k}|x_i\rangle_{v_i}\right)\otimes\left(\left(\prod_{v\in V'}Z_v^{y_v}\right)|G[V']\rangle\right).
\end{equation}

\section{Vertex-minor formula}\label{app:formula}
Here we provide the C$_2$MS formulas\footnote{
    Note that there seems to be a typo in~\cite{Courcelle2007} since they use the formula $V=X_e\cup Y_e\cup Z_e$ to express that the vector $(X_e,Y_e,Z_e)$ is complete, i.e. that each element of the vector is non-zero.
    This is however not true, consider for example the sets $X_e=Y_e=Z_e=V$ which corresponds to the zero-vector since $1+\omega+\omega^2=0$ and is therefore not complete.
    Their formula for whether $(X_e,Y_e,Z_e)$ is a Eulerian vector is on the other hand still correct since the second part of the formula can never be true for a non-complete vector $(X_e,Y_e,Z_e)$ for which $V=X_e\cup Y_e\cup Z_e$.
}
which we make use of in section~\ref{sec:courcelle}\ref{subsec:vm_formula}.
We state what the formula expresses and its quantifier rank in table~\ref{tab:formulas}.

\begin{table}[!h]
    \caption{The C$_2$MS formulas used in section~\ref{sec:courcelle}\ref{subsec:vm_formula}, what they express and their quantifier rank. $b_v$ is the unique vector in $S(G)$ with respect to $(X_e,Y_e,Z_e)$ as defined in~\cite{Courcelle2007}.}
    \label{tab:formulas}
    \begin{tabular}{|c|l|c|}
        \hline
        \textbf{Formula} & \textbf{True if and only if} & \textbf{qr}\\
        \hline\hline
        $\textproc{Disjoint}(X,Y,Z)$ & $X$,$Y$ and $Z$ are pairwise disjoint & 1\\
        \hline
        $\textproc{Part}(X,Y,Z)$ & $(X,Y,Z)$ is a tripartition & 1\\
        \hline
        $\textproc{EvenInter}(Q,v)$ & $\abs{N_v\cap Q}=0\pmod{2}$ & 2\\
        \hline
        $\textproc{Member}(X,Y,Z)$ & $(X,Y,Z)$ is a vector of $S(G)$ & 4\\
        \hline
        $\textproc{Eul}(X_e,Y_e,Z_e)$ & $(X_e,Y_e,Z_e)$ is a Eulerian vector of $S(G)$ & 7\\
        \hline
        $\textproc{Base}(X,Y,Z,X_e,Y_e,Z_e)$ & $(X,Y,Z)$ is $b_v$ wrt. $(X_e,Y_e,Z_e)$ in $S(G)$ & 4\\
        \hline
        $\textproc{Adj}(u,v,X_e,Y_e,Z_e)$ & $(u,v)$ is an edge of $\mathcal{G}(X_e,Y_e,Z_e)$ & 7\\
        \hline
    \end{tabular}
    \vspace*{-4pt}
\end{table}

\begin{equation}
    \textproc{Disjoint}(X,Y,Z)=\forall x:\big(\neg(x\in X\land x\in Y)\land\neg(x\in X\land x\in Z)\land\neg(x\in Y\land x\in Z)\big)
\end{equation}
\begin{equation}
    \textproc{Part}(X,Y,Z)=\Big(\underbrace{\forall x:(x\in X\lor x\in Y\lor x\in Z)}_{\text{"}V=X\cup Y\cup Z\text{"}}\Big)\land\textproc{Disjoint}(X,Y,Z)
\end{equation}
\begin{equation}
    \textproc{EvenInter}(Q,v)=\forall R:\big(\underbrace{\forall u: u\in R\Leftrightarrow (\adj{u,v}\land u\in Q)}_{\text{"}R=N_v\cap Q\text{"}}\big)\Rightarrow\textproc{Even}(R)
\end{equation}
\begin{align}
    \textproc{Member}(X,Y,Z)=\textproc{Disjoint}(X,Y,Z)\land\Bigg[\exists Q:\Bigg(\forall v:\Big(\big(v\in X&\Leftarrow (v\notin Q\land\neg\textproc{EvenInter}(Q,v))\big)\land\nonumber\\
                                                                                             \big(v\in Y&\Leftarrow (v\in Q\land\textproc{EvenInter}(Q,v))\big)\land\nonumber\\
                                                                                             \big(v\in Z&\Leftarrow (v\in Q\land\neg\textproc{EvenInter}(Q,v))\big)\land\nonumber\\
                                                                                 \big(\underbrace{\neg(v\in X\lor v\in Y\lor v\in Z)}_{\text{"}v\in V(G)\setminus (X\cup Y\cup Z)\text{"}}&\Leftarrow (v\notin Q\land\textproc{EvenInter}(Q,v))\big)\Big)\Bigg)\Bigg]
\end{align}
\begin{align}
    \textproc{Eul}(X_e,Y_e,Z_e)=&(\textproc{Part}(X_e,Y_e,Z_e))\land\nonumber\\
                                                   &\Big[\forall X,Y,Z:\big(X\subseteq X_e\land Y\subseteq Y_e\land Z\subseteq Z_e\land\textproc{Member}(X,Y,Z)\big)\nonumber\\
                                &\Rightarrow(\underbrace{\forall v:\neg(v\in X\lor v\in Y\lor v\in Z)}_{\text{"}X=Y=Z=\emptyset\text{"}})\Big]\label{eq:MS_Eul}
\end{align}
\begin{align}
    \textproc{Base}(X,Y,Z,X_e,Y_e,Z_e,v)=&\textproc{Member}(X,Y,Z)\land (\overbrace{v\in X\lor v\in Y\lor V\in Z}^{\text{"}v\in X\cup Y\cup Z\text{"}})\land\nonumber\\
                                         &\Big[\forall u:\neg(v=u)\Rightarrow\nonumber\\
                                         &\Big((u\in X\Rightarrow u\in X_e)\land(u\in Y\Rightarrow u\in Y_e)\land(u\in Z\Rightarrow u\in Z_e)\Big)\Big]
\end{align}
\begin{align}
    \textproc{Adj}(u,v,X_e,Y_e,Z_e)=&\neg(u=v)\land\nonumber\\&\Big[\exists X,Y,Z:\big(\textproc{Base}(X,Y,Z,X_e,Y_e,Z_e,v)\land (\underbrace{u\in X\lor u\in Y\lor u\in Z}_{\text{"}u\in X\cup Y\cup Z\text{"}})\big)\Big]\label{eq:MS_Adj}
\end{align}

\section{Local complementations from Eulerian vector}\label{app:LCfromE}
Let's assume $G$ is a graph, $S(G)$ its canonical isotropic system and $(U_e,V_e,W_e)$ an Eulerian vector describing the graph $\mathcal{G}(U_e,V_e,W_e)$.
We here consider the question of how to find a sequence of local complementations $m$, such that $\tau_m(G)=\mathcal{G}(U_e,V_e,W_e)$.
In this section we will represent Eulerian vectors as vectors in $\mathbb{F}_4^n$ instead of tripartitions of $V(G)$, where $n=\abs{G}$.
A tripartition $(U_e,V_e,W_e)$ induces the vector
\begin{equation}\label{eq:eul_vec}
A(v)=\begin{cases}1 & \quad\text{if }v\in U_e\\\omega & \quad\text{if }v\in V_e\\\omega^2 & \quad\text{if }v\in W_e\end{cases}
\end{equation}
where $A(v)$ is element $v$ of the vector $A\in\mathbb{F}_4^n$ and $\omega$ is a primitive element of $\mathbb{F}_4$.
In~\cite{Bouchet1993} it is shown that for any Eulerian vector $A$ of an isotropic system there exists exactly one other Eulerian vector $A'$ which differ from $A$ in only the element $v$.
This other Eulerian vector $A'$ is denoted $A*v$ and is called a switching of $A$.
Furthermore, the switching induces a local complementation on the graphs the Eulerian vectors describe.
More precisely, if the Eulerian vector $A$ describe the graph $G$, then the Eulerian vector $A*v$ describe the graph $\tau_v(G)$.
Thus, if we find a sequence of switchings taking the Eulerian vector describing $G$ to the Eulerian vector describing $\mathcal{G}(U_e,V_e,W_e)$, we have also directly found a sequence of local complementations taking $G$ to $\mathcal{G}(U_e,V_e,W_e)$.
The Eulerian vector of $S(G)$ describing $G$ is given as $A_0=(\omega,\dots,\omega)$ and the one describing $\mathcal{G}(U_e,V_e,W_e)$ is given as in equation~\eqref{eq:eul_vec}.
A sequence of switchings taking $A_0$ to $A$ can be found in linear time similarly to the method described in section 4 of~\cite{Bouchet1991}.
The idea is to go over the vectors $A_0$ and $A$ element by element and make these equal one by one.
Let's consider a vertex $v\in V(G)$ and the four vectors $A_0$, $A_0*v$, $A$ and $A*v$.
These four vectors cannot all differ in the element $v$, since there are only three non-zero elements of $\mathbb{F}_4$, i.e. $\{1,\omega,\omega^2\}$.
Repeating this process for all elements of $V(G)$ will give two sequences of switchings, $m_1$ and $m_2$, one for $A_0$ and one for $A$ such that
\begin{equation}
    A_0*m_1=A*m_2.
\end{equation}
Since the switchings are involutions we have that
\begin{equation}
    A_0*(m_1\overline{m}_2)=A
\end{equation}
and therefore that
\begin{equation}
    \tau_{m_1\overline{m}_2}(G)=\mathcal{G}(U_e,V_e,W_e),
\end{equation}
where $m_1\overline{m}_2$ is the sequence $m_1$ followed by the reversal of $m_2$.
Finding $m=m_1\overline{m}_2$ thus takes time $\mathcal{O}(n)$.

\newpage

\bibliographystyle{abbrv}
\bibliography{refs_comb}

\end{document}